\title{A Cancellation Law for Probabilistic Processes}
\author{%
  Rob van Glabbeek%
  \thanks{Supported by Royal Society Wolfson Fellowship
      RSWF{\textbackslash}R1{\textbackslash}221008} 
  \institute{University of Edinburgh}
  \institute{University of New South Wales}
  \email{rvg@stanford.edu}
\and
  Jan Friso Groote
  \institute{Eindhoven University of Technology}
  \email{j.f.groote@tue.nl}
\and
  Erik de Vink
  \institute{Eindhoven University of Technology}
  \email{evink@win.tue.nl}
}
\newcommand{\xRightarrow}[2][]{\ext@arrow
  0359\Rightarrowfill@{#1}{#2}}
\newcommand{\Arrow}{\xRightarrow{\, {} \; \,}}
\newcommand{\DistrE}{\Distr(\calE)}
\newcommand{\DistrF}{\Distr(\calF)}
\DeclareMathAlphabet{\mathbbm}{U}{bbm}{m}{n}          
\newcommand{\IN}{\mathbbm{N}}                         
\newcommand{\IR}{\mathbbm{R}}
\newcommand{\after}{\mathord{\upharpoonright}}
\newcommand{\alphaarrow}{\arrow{\alpha}}
\newcommand{\alphahidearrow}{\arrow{(\alpha)}}
\newcommand{\arrow}[1]{%
  \mkern1mu \xrightarrow{\raisebox{0ex}[0.1ex][-0.1ex]{\scriptsize $\,#1\,$}}}
\newcommand{\astop}{a \pref \partial(\bfzero)}
\newcommand{\bfzero}{\normalfont{\textbf{0}}}
\newcommand{\bigoplusiinI}{\textstyle{\bigoplus_{i \mathord{\in} I}}}
\newcommand{\bigoplusjinJ}{\textstyle{\bigoplus_{j \mathord{\in} J}}}
\newcommand{\bigoplusjinJi}{\textstyle{\bigoplus_{j \mathord{\in} J_i}}}
\newcommand{\bigopluskinK}{\textstyle{\bigoplus_{k \mathord{\in} K}}}
\newcommand{\bisim}{\mathrel{\,%
  \raisebox{.3ex}{$\underline{\makebox[.7em]{$\leftrightarrow$}}$}\,}}
\newcommand{\bisimilarity}{bisimilarity}
\newcommand{\bnfeq}{\mathrel{{:}{:}{=}}}
\newcommand{\brbisim}{\mathrel{\,%
  \raisebox{.3ex}{$\underline{\makebox[.8em]{$\leftrightarrow$}}_{\,
      b}$}\mkern-1mu}}
\newcommand{\bstop}{b \pref \partial(\bfzero)}
\newcommand{\cstop}{c \pref \partial(\bfzero)}
\newcommand{\den}[1]{\mbox{$[\hspace{-1.6pt}[$}#1\mbox{$]\hspace{-1.6pt}]$}}
\newcommand{\half}{\textstyle{\frac12}}
\newcommand{\lc}{\lbrace \:}
\newcommand{\mopcalR}{\,{\calR}\,}
\newcommand{\mydot}{{\cdot}}
\newcommand{\mycdot}{\mathop{\cdot}}
\newcommand{\myeq}{{} = \:}
\newcommand{\mysmallfrac}[2]{\frac{%
    \mbox{\small $#1 \rule[-3pt]{0pt}{7pt}$}}{
    \mbox{\small $#2 \rule{0pt}{7pt}$}}}
\newcommand{\mytinyfrac}[2]{\mbox{\tiny $\textstyle{\frac{#1}{#2}}$}}
\newcommand{\oplusr}{\probc{r}}
\newcommand{\pref}{\mathop{.}}
\newcommand{\probc}[1]{\mathbin{\mbox{${}_{\scriptstyle #1
  \mkern1mu} \hspace{-.8pt}\oplus\,$}}} 
\newcommand{\rc}{\: \rbrace}
\newcommand{\singleton}[1]{\lbrace {#1} \rbrace}
\newcommand{\sosrule}[2]{%
  \def\arraystretch{0.50}
  \begin{array}{c} {#1}\rule[-4pt]{0pt}{11pt} \\ \hline
    \rule{0pt}{13pt}{#2} \end{array}
  \def\arraystretch{1.0}}
\newcommand{\stardot}{\mycdot}
\newcommand{\sumiinI}{{\textstyle\sum_{\, i {\in} I}}}
\newcommand{\sumjinJ}{{\textstyle\sum_{\, j {\in} J}}}
\newcommand{\tauarrow}{\arrow{\tau}}
\newcommand{\tauhidearrow}{\arrow{(\tau)}}
\newcommand{\ttfrac}[2]{\textstyle{\frac{#1}{#2}}}
\newcommand{\etabar}{\bar{\eta}}
\newcommand{\mubar}{\bar{\mu}}
\newcommand{\nubar}{\bar{\nu}}
\newcommand{\muhat}{\hat{\mu}}
\newcommand{\nuhat}{\hat{\nu}}
\newcommand{\Distr}{\mathit{Distr}}
\newcommand{\spt}{\mathit{spt}}
\newcommand{\wgt}{\mathit{wgt}}
\newcommand{\calA}{\mathcal{A}}
\newcommand{\calE}{\mathcal{E}}
\newcommand{\calF}{\mathcal{F}}
\newcommand{\calP}{\mathcal{P}}
\newcommand{\calR}{\mathcal{R}}
\newtheorem{theorem}{Theorem}
\newtheorem{definition}[theorem]{Definition}
\newtheorem{lemma}[theorem]{Lemma}
\newtheorem{corollary}[theorem]{Corollary}
\newcommand{\blankline}{\vspace*{\baselineskip}}
\begin{document}
\maketitle


\begin{abstract}
  We show a cancellation property for probabilistic
  choice. If $\mu \oplus \varrho$ and $\nu \oplus \varrho$ are branching
  probabilistic  bisimilar, then $\mu$
  and~$\nu$ are also branching probabilistic bisimilar. We do this in 
  the setting of a basic process language
  involving non-deterministic and probabilistic choice and define branching
  probabilistic bisimilarity on distributions.
  Despite the fact that the cancellation property is very elegant and
  concise, we failed to provide a short and natural combinatorial
  proof. Instead we provide a proof using metric topology. Our major
  lemma is that every distribution can be unfolded into an equivalent
  stable distribution, where the topological arguments are required to
  deal with uncountable branching.
\end{abstract}

\section{Introduction}
\label{sec:intro}

A familiar property of the real numbers~$\IR$ is the additive
cancellation law: if $x + z = y + z$ then~$x = y$.
Switching to the Boolean setting, and interpreting~$+$ by~$\lor$ and
$=$ by~$\Leftrightarrow$, the property becomes
$(x \lor z) \Leftrightarrow (y \lor z)$ implies $x \Leftrightarrow y$.
This is not generally valid. Namely, if $z$~is true, nothing can be derived
regarding the truth values of $x$ and~$y$.
Algebraically speaking, the reals provide an `additive inverse', and the
Booleans do not have a `disjunctive' version of it.

A similar situation holds for strong bisimilarity in the pure
non-deterministic setting vs.\ strong bisimilarity in the mixed
non-deterministic and probabilistic setting.
When we have $E + G \bisim F + G$ for the non-deterministic processes
$E + G$ and~$F + G$, it may or may not be the case that~$E \bisim F$.
However, if
${P \probc{1/2 \mkern1mu} R} \, \bisim \, {Q \probc{1/2 \mkern1mu} R}$
for the probabilistic processes $P \probc{1/2 \mkern1mu} R$ and
$Q \probc{1/2 \mkern1mu} R$, with probabilistic
choice~$\probc{1/2 \mkern1mu}$, we can exploit a semantic
characterization of bisimilarity as starting point of a
calculation. The characterization reads
\begin{equation}
  P \bisim Q
  \quad \text{iff} \quad
  \forall C \in \calE/{\bisim} \colon
  \mu[C] = \nu[C]
  \label{eqn-characterization}
\end{equation}
where the distributions $\mu, \nu \in \DistrE$ are induced by $P$
and~$Q$.
To spell out the above, two probabilistic processes $P$ and~$Q$ are
strongly bisimilar iff the distributions $\mu$ and~$\nu$ induced by
$P$ and~$Q$, respectively, assign the same probability to every
equivalence class~$C$ of non-deterministic processes modulo strong
bisimilarity.
In the situation that
$P \probc{1/2 \mkern1mu} R \bisim Q \probc{1/2 \mkern1mu} R$ we obtain
from~(\ref{eqn-characterization}), for equivalence classes
$C \in \calE/{\bisim}$ and distributions $\mu$, $\nu$, and~$\rho$
induced by the processes $P$, $Q$, and~$R$, that
\begin{displaymath}
  \begin{array}{l}
    P \probc{1/2 \mkern1mu} R \,\bisim\, Q \probc{1/2 \mkern1mu} R \implies
    \forall C \in \calE/{\bisim} \colon
    \half \mu[C] + \half \varrho[C] = 
    \half \nu[C] + \half \varrho[C] \implies
    {} \smallskip \\ \qquad \qquad
    \forall C  \in \calE/{\bisim} \colon
    \half \mu[C] = \half \nu[C] \implies
    \forall C  \in \calE/{\bisim} \colon
    \mu[C] = \nu[C] \implies
    P \bisim Q
  \end{array}
\end{displaymath}
relying on the arithmetic of the reals.

We are interested in whether the cancellation law 
also holds for weaker notions of process equivalence for probabilistic
processes, especially for branching probabilistic bisimilarity as proposed
in~\cite{GGV19:fc}. We find that it does but the proof is involved. 
A number of initial attempts were directed towards finding a straightforward 
combinatorial proof, but all failed. A proof in
a topological setting, employing the notion of sequential compactness to deal 
with potentially infinite sequences of transitions is reported in this paper. 
We leave the existence of a shorter, combinatorial proof as an open question.
 
Our strategy to prove the above cancellation law
for probabilistic processes and branching probabilistic bisimilarity
is based on two intermediate results: (i)~every probabilistic process
unfolds into a so-called \emph{stable} probabilistic process, and
(ii)~for stable probabilistic processes a characterization of the
form~(\ref{eqn-characterization}) does hold.
Intuitively, a stable process
is a process that cannot do an internal move without leaving its
equivalence class.

In order to make the above more concrete, let us consider an example.
For the ease of presentation we use distributions directly, rather
than probabilistic processes.
Let the distributions $\mu$ and~$\nu$ be given by\vspace{-2ex}
\begin{align*}
  \mu & \myeq \half \delta(\astop) \oplus \half \delta(\bstop)
  \smallskip \\
  \nu & \myeq \ttfrac13 \delta( \tau \pref \mkern1mu (
  \partial(\astop)
  \probc{\mytinyfrac12} 
  \partial(\bstop) )) \oplus \ttfrac13 \delta(\astop) \oplus \ttfrac13
  \delta(\bstop)
\end{align*}
with $a$ and~$b$ two different actions. The distribution~$\mu$
assigns probability $0.5$ to $a \pref \partial(\bfzero)$, meaning an
$a$-action followed by a deadlock with probability~$1$, and
probability $0.5$ to $b \pref \partial(\bfzero)$, i.e.\ a $b$-action
followed by deadlock with probability~$1$. The distribution~$\nu$
assigns both these non-deterministic processes probability~$\frac13$
and assigns the remaining probability~$\frac13$ to
$\tau \pref \mkern1mu ( \partial(\astop) \probc{\mytinyfrac12}
\partial(\bstop) )$, where a $\tau$-action precedes 
a 50-50 percent
choice between the processes mentioned earlier. Below, we show that
$\mu$ and~$\nu$ are branching probabilistic bisimilar,
i.e.~$\mu \brbisim \nu$. However, if $C_1$, $C_2$ and~$C_3$ are the
three different equivalence classes of
$\tau \pref \mkern1mu ( \partial(\astop) \probc{\mytinyfrac12}
\partial(\bstop)$, $\astop$ and~$\bstop$, respectively, we have
\begin{displaymath}
  \mu[C_1] = 0 \neq \ttfrac13 = \nu[C_1] ,\ 
 \mu[C_2] = \ttfrac12 \neq \ttfrac13 = \nu[C_2] ,\
 \text{and}\
 \mu[C_3] = \ttfrac12 \neq \ttfrac13 = \nu[C_3].
\end{displaymath}
Thus, although $\mu \brbisim \nu$, it does not hold that
$\mu[C] = \nu[C]$ for every equivalence class~$C$. Note that the
distribution~$\nu$ is not stable, in the sense that it allows an
internal transition to the branchingly equivalent~$\nu$.

As indicated, we establish in this paper a cancellation law for branching
probabilistic bisimilarity in the context of mixed non-deterministic
and probabilistic choice, exploiting the process language
of~\cite{BS01:icalp}, while dealing with distributions of finite
support over non-deterministic processes for its semantics. We propose
the notion of a stable distribution and show that every distribution
can be unfolded into a stable distribution by chasing its (partial)
$\tau$-transitions.  Our framework, including the notion of branching
probabilistic bisimulation, builds on that of~\cite{GV19:sg65,GGV19:fc}.

Another trait of the current paper, as in~\cite{GV19:sg65,GGV19:fc},
is that distributions are taken as semantic foundation for
bisimilarity, rather than seeing bisimilarity primarily as an
equivalence relation on non-deterministic processes, which is
subsequently lifted to an equivalence relation on distributions, as
is the case for the notion of branching probabilistic bisimilarity
of~\cite{SL94:concur,Seg95:thesis} and also
of~\cite{AW06:tcs,AGT12:tcs}. The idea to consider distributions as
first-class citizens for probabilistic bisimilarity stems
from~\cite{EHKTZ13:qest}. In the systematic overview of the
spectrum~\cite{BDH2020:acta}, also Baier et al.\ argue that a
behavioral relation on distributions is needed to properly deal with
silent moves. 

Metric spaces and complete metric spaces, as well as their associated
categories, have various uses in concurrency theory. In the setting of
semantics of probabilistic systems, metric topology has been advocated
as underlying denotational domain, for example
in~\cite{BK00:mscs,HVB00:entcs,Nor97:phd}. For quantitative comparison
of Markov systems, metrics and pseudo-metric have been proposed for a
quantitative notion of behavior equivalence, see
e.g.~\cite{DGJP99:concur,GJS90:ifip,BW05:tcs}. The specific use of
metric topology in this paper to derive an existential property of a
transition system seems new.

The remainder of the paper is organized as follows.
Section~\ref{sec-preliminaries} collects some definitions
from metric topology and establishes some auxiliary results.
A simple process language with non-deterministic and probabilistic
choice is introduced in Section~\ref{sec-process-language}, together
with examples and basic properties of the operational
semantics.
Our definition of branching probabilistic bisimilarity is given in
Section~\ref{sec-props}, followed by a congruence result with respect
to probabilistic composition and a confluence property.
The main contribution of the paper is presented in
Sections~\ref{sec-continuity} and \ref{sec-cancellativity}. Section 5
shows in a series of continuity lemmas that the set
of branching probabilistic bisimilar descendants is a (sequentially)
compact set. Section 6 exploits these results to argue that unfolding
of a distribution by inert $\tau$-transitions has a stable end point,
meaning that a stable branchingly equivalent distribution can be
reached. With that result in place, a cancellation law for branching
probabilistic bisimilarity is established.
Finally, Section~\ref{sec-conclusion} wraps up with concluding remarks
and a discussion of future work.


\section{Preliminaries}
\label{sec-preliminaries}

For a non-empty set~$X$, we define $\Distr(X)$ as the set of all
probability distributions over~$X$ of finite support, i.e.,
$\Distr(X) = \lc \mu \colon X \to [0,1] \mid
\text{$\textstyle{\sum}_{\, x \mathbin\in X} \: \mu(x) = 1$,
  $\mu(x) > 0$ for finitely many~$x \in X$} \rc$.  We use $\spt(\mu)$
to denote the finite set $\lc x \in X \mid \mu(x) > 0 \rc$.
Often, we write $\mu = \bigoplusiinI \: p_i \stardot x_i$ for an index
set~$I$, $p_i \geqslant 0$ and $x_i \in X$ for~$i \in I$, where
$p_i > 0$ for finitely many~$i \in I$. Implicitly, we assume
$\sumiinI \: p_i = 1$.
We also write $r \mkern1mu \mu \oplus (1-r) \nu$ and, equivalently, 
$\mu \oplusr \nu$
for~$\mu, \nu \in \Distr(X)$ and $0 \leqslant r \leqslant 1$. As
expected, we have that
$(r \mkern1mu \mu \oplus (1-r) \nu)(x)=(\mu \oplusr \nu)(x) = r
\mkern1mu \mu(x) + (1-r) \nu(x)$ for~$x \in X$.
The \textit{Dirac distribution} on~$x$, the unique distribution
with support $x$, is denoted $\delta(x)$.

The set $\Distr(X)$ becomes a complete\footnote{A
    \emph{Cauchy sequence} is a sequence of points in a metric space
    whose elements become arbitrarily close to each other as the
    sequence progresses. The space is \emph{complete} if every such
    sequence has a limit within the space.}  metric space when
endowed with the sup-norm~\cite{Gir82:cata}, given by
$d(\mu,\nu) = \textstyle\sup_{x \mathbin\in X} \: | \mu(x) -
\nu(x)|$. This distance is also known as the
distance of uniform convergence or Chebyshev distance.

\begin{theorem}\label{thm-seq-compact}
If $Y \subseteq X$ is finite, then $\Distr(Y)$ is a sequentially
compact subspace of~$\Distr(X)$. This means that every sequence
in~$\Distr(Y)$ has a convergent subsequence with a limit in~$\Distr(Y)$.
\end{theorem}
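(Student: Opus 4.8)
The plan is to reduce the problem to finite-dimensional Euclidean space and invoke the Bolzano--Weierstrass theorem. Write $Y = \{y_1, \ldots, y_n\}$. The first observation is that, although the sup-norm ranges over the possibly uncountable set~$X$, every distribution in $\Distr(Y)$ vanishes outside~$Y$, so that for $\mu, \nu \in \Distr(Y)$ we have $d(\mu,\nu) = \sup_{x \in X} |\mu(x)-\nu(x)| = \max_{1 \leqslant i \leqslant n} |\mu(y_i)-\nu(y_i)|$. Hence the map sending $\mu$ to the tuple $(\mu(y_1), \ldots, \mu(y_n))$ is an isometry from $\Distr(Y)$, equipped with the sup-norm, onto the standard probability simplex $\Delta = \{(p_1,\ldots,p_n) \in [0,1]^n \mid \sum_{i=1}^n p_i = 1\}$ carried with the $\ell^\infty$-metric. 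Sequential compactness of $\Distr(Y)$ is therefore equivalent to that of~$\Delta$.

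Next I would extract a convergent subsequence coordinate by coordinate. Given an arbitrary sequence $(\mu_k)_{k \in \IN}$ in $\Distr(Y)$, each coordinate sequence $(\mu_k(y_i))_{k \in \IN}$ lies in the closed interval $[0,1]$ and is thus bounded. Applying Bolzano--Weierstrass to the first coordinate yields a subsequence along which $\mu_k(y_1)$ converges; refining this subsequence successively for $i = 2, \ldots, n$ --- only finitely many refinements are needed --- produces a single subsequence $(\mu_{k_j})_{j \in \IN}$ such that $\mu_{k_j}(y_i) \to p_i$ as $j \to \infty$ for every~$i$.

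It remains to verify that the limit is again a distribution in $\Distr(Y)$ and that convergence holds in the sup-norm. Define $\mu_\infty$ by $\mu_\infty(y_i) = p_i$ and $\mu_\infty(x) = 0$ for $x \notin Y$. Each $p_i \in [0,1]$, and, because the sum ranges over only finitely many indices, $\sum_{i=1}^n p_i = \lim_j \sum_{i=1}^n \mu_{k_j}(y_i) = 1$; hence $\mu_\infty \in \Distr(Y)$. Finally $d(\mu_{k_j}, \mu_\infty) = \max_{1 \leqslant i \leqslant n} |\mu_{k_j}(y_i) - p_i| \to 0$, again because a maximum of finitely many sequences each tending to zero tends to zero. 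Thus $(\mu_{k_j})_j$ converges to $\mu_\infty \in \Distr(Y)$, as required.

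The argument is elementary, and the only real subtlety --- the step on which everything hinges --- is the reduction in the first paragraph: the finiteness of~$Y$ is precisely what collapses the supremum over the uncountable index set~$X$ to a maximum over finitely many coordinates, and what guarantees that both the total mass and the sup-norm distance pass to the limit. Without finiteness the result would fail, since $\Distr(X)$ itself is not sequentially compact in the sup-norm.
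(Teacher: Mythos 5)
Your proof is correct and follows essentially the same route as the paper: both identify $\Distr(Y)$ with a bounded subset of $\IR^n$ under the (equivalent, here identical) Chebyshev/$\ell^\infty$ metric and invoke the Bolzano--Weierstrass theorem, with your version merely unfolding that citation into an explicit coordinate-wise subsequence extraction and a check that the limit remains in the simplex.
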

\begin{proof}
    $\Distr(Y)$ is a bounded subset of $\IR^n$, where
    $n:=|Y|$ is the size of $Y$.  It also is closed. For $\IR^n$
    equipped with the Euclidean metric, the sequential compactness of
    closed and bounded subsets is known as the Bolzano-Weierstrass
    theorem~\cite{Lan97:utm}. When
    using the Chebyshev metric, the same proof applies.
\end{proof}

\noindent
In Section~\ref{sec-continuity} we use the topological structure
of the set of distributions over non-deterministic processes to study
unfolding of partial $\tau$-transitions. There we make use of the
following representation property.

\begin{lemma}
  \label{lem-representation-of-limit}
  Suppose the sequence of distributions $( \mu_i )_{i{=}0}^\infty$
  converges to the distribution~$\mu$ in~$\Distr(X)$. Then a sequence
  of distributions $( \mu'_i )_{i{=}0}^\infty$ in~$\Distr(X)$ and a
  sequence of probabilities $( r_i )_{i{=}0}^\infty$ in~$[0,1]$ exist
  such that
  $\mu_i = (1-r_i) \mkern1mu \mu \mathrel\oplus r_i \mkern1mu \mu'_i$
  for~$i \in \IN$ and $\lim_{\, i \rightarrow \infty} \: r_i = 0$.
\end{lemma}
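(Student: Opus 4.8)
The plan is to realize $\mu_i'$ as the normalized residual obtained by peeling off the largest possible multiple of the limit~$\mu$ from~$\mu_i$. Concretely, for each~$i$ I would set
\[
  r_i \;=\; \max\Bigl(0,\; 1 - \min_{x \in \spt(\mu)} \frac{\mu_i(x)}{\mu(x)}\Bigr),
\]
which is well defined because $\spt(\mu)$ is finite and non-empty and $\mu(x) > 0$ on it. This $r_i$ is precisely the smallest value in $[0,1]$ for which $(1-r_i)\mu(x) \le \mu_i(x)$ holds for every $x \in X$: on $\spt(\mu)$ this is the binding constraint, while off $\spt(\mu)$ it reads $0 \le \mu_i(x)$ and holds automatically. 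A short counting argument using that both distributions sum to~$1$ shows $\min_{x \in \spt(\mu)} \mu_i(x)/\mu(x) \le 1$, so the outer $\max$ keeps $r_i$ in $[0,1]$.

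When $r_i > 0$ I would then define $\mu_i'(x) = \bigl(\mu_i(x) - (1-r_i)\mu(x)\bigr)/r_i$. The numerator is non-negative by the choice of~$r_i$, so $\mu_i' \ge 0$; summing over~$x$ gives $\bigl(1 - (1-r_i)\bigr)/r_i = 1$, so $\mu_i'$ is a genuine distribution, and its support lies in $\spt(\mu_i) \cup \spt(\mu)$, hence is finite. The identity $\mu_i = (1-r_i)\mu \oplus r_i \mu_i'$ then holds by construction. For the degenerate case $r_i = 0$ I would observe that $r_i = 0$ forces $\mu_i(x) \ge \mu(x)$ throughout $\spt(\mu)$; since both distributions sum to~$1$, this in turn forces $\mu_i = \mu$, so I may take $\mu_i'$ to be any distribution (for instance $\mu$ itself) and the required identity reduces to $\mu_i = \mu$.

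It remains to show $r_i \to 0$. Because $\mu_i \to \mu$ in the sup-norm, in particular $\mu_i(x) \to \mu(x)$ for each fixed~$x$, and for $x \in \spt(\mu)$ the ratio $\mu_i(x)/\mu(x) \to 1$. Finiteness of $\spt(\mu)$ lets me pass this convergence through the minimum, giving $\min_{x \in \spt(\mu)} \mu_i(x)/\mu(x) \to 1$ and hence $r_i \to 0$.

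I expect no serious obstacle: the construction is explicit and the remaining verifications are routine arithmetic. The only points needing care are the boundary behaviour — keeping $r_i \in [0,1]$ and correctly handling $r_i = 0$, where $\mu_i'$ cannot be defined by division — and the observation that pointwise convergence of the finitely many ratios transfers to their minimum, which is exactly where finiteness of $\spt(\mu)$ is used.
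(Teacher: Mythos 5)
Your proposal is correct and follows essentially the same route as the paper: the same choice $r_i = 1 - \min_{x \in \spt(\mu)} \mu_i(x)/\mu(x)$, the same normalized residual $\mu_i'$, and the same treatment of the degenerate case $r_i = 0$ (where $\mu_i = \mu$ is forced). The only cosmetic difference is in showing $r_i \to 0$: the paper bounds $r_i \leqslant d(\mu,\mu_i)/\mu(x'_{\mathit{min}})$ explicitly, while you pass pointwise convergence of the finitely many ratios through the minimum — both arguments rest on the same facts, namely finiteness of $\spt(\mu)$ and positivity of $\mu$ on it.
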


\begin{proof}
  Let $i \in \IN$.  For $x \in \spt(\mu)$, the quotient
  ${\mu_i(x)}/{\mu(x)}$ is non-negative, but may exceed~$1$. However,
  $0 \leqslant \min \lc \mysmallfrac{\mu_i(x)}{\mu(x)} \mid x \in \spt(\mu)
  \rc \leqslant 1$, since the numerator cannot strictly exceed the
  denominator for all $x \in \spt(\mu)$. Let
  $r_i = 1 - \min \lc \mysmallfrac{\mu_i(x)}{\mu(x)} \mid x \in \spt(\mu)
  \rc$ for~$i \in \IN$. Then we have $0 \leqslant r_i \leqslant 1$.

  For~$i \in \IN$, define $\mu'_i \in \Distr(X)$ as follows. If
  $r_i > 0$ then
  $\mu_i'(x) = 1/{r_i} \cdot \bigl[ \mu_i(x) - (1-r_i) \mu(x) \bigr]$
  for~$x \in X$; if $r_i = 0$ then $\mu'_i = \mu$. We verify
  for~$r_i > 0$ that $\mu'_i$ is indeed a distribution: (i)~For
  $x \notin \spt(\mu)$ it holds that $\mu(x) = 0$, and therefore
  $\mu'_i(x) = {1}/{r_i} \cdot \mu_i(x) \geqslant 0$. For $x \in \spt(\mu)$,
  \begin{displaymath}
    \mu'_i(x) =
    {1}/{r_i} \cdot \bigl[ \mu_i(x) - (1-r_i) \mu(x) \bigr] =
    {\mu(x)} / {r_i} \cdot \bigl[ \mysmallfrac{\mu_i(x)}{\mu(x)} -
    \mysmallfrac{\mu_i(x_{\mathit{min}})}{\mu(x_{\mathit{min}})} \big] \geqslant 0
  \end{displaymath}
  for $x_{\mathit{min}} \in \spt(\mu)$ such that
  ${\mu_i(x_{\mathit{min}})}/{\mu(x_{\mathit{min}})}$ is
  minimal. (ii)~In addition,
  \begin{displaymath}
    \begin{array}{l}
      \textstyle{\sum} \lc \mu'_i(x) \mid x \in X \rc =
      {1}/{r_i} \cdot \textstyle{\sum} \lc \mu_i(x) \mid
      x \notin \spt(\mu) \rc +
      {1}/{r_i} \cdot \textstyle{\sum} \lc \mu_i(x) - (1-r_i) \mu(x)
      \mid x \in \spt(\mu) \rc =
      {} \smallskip \\ \qquad
      {1}/{r_i} \cdot \textstyle{\sum} \lc \mu_i(x) \mid
      x \in X \rc -
      {(1-r_i)}/{r_i} \cdot \textstyle{\sum} \lc \mu(x) \mid
      x \in \spt(\mu) \rc =
      {1}/{r_i} - {(1-r_i)}/{r_i} =
      {r_i}/{r_i} = 1.
    \end{array}
  \end{displaymath}
  Therefore, $0 \leqslant \mu'_i(x) \leqslant 1$ and $\sum \lc
  \mu'_i(x) \mid x \in X \rc = 1$.

  Now we prove that $\mu_i = (1-r_i) \mkern1mu \mu \mathrel\oplus r_i
  \mkern1mu \mu'_i$. 
  If~$r_i = 0$, then $\mu_i = \mu$, $\mu'_i = \mu$, and
  $\mu_i = (1-r_i) \mkern1mu \mu \oplus r_i \mkern1mu \mu'_i$.
  If~$r_i > 0$, then
  $\mu_i(x) = (1-r_i) \mkern1mu \mu(x) \oplus r_i \mkern1mu \mu'_i(x)$
  by definition of~$\mu'_i(x)$ for all~$x \in X$. Thus, also
  $\mu_i = (1-r_i) \mkern1mu \mu \oplus r_i \mkern1mu \mu'_i$ in this
  case.

  \newcommand{\xmn}{x'_{\mathit{min}}}

  Finally, we show that $\lim_{\, i \rightarrow \infty} \: r_i =
  0$. Let $\xmn \in \spt(\mu)$ be such that $\mu(\xmn)$ is
  minimal. Then we have
  \begin{displaymath}
    r_i =
    1 - \min \lc \mysmallfrac{\mu_i(x)}{\mu(x)} \mid
    x \in \spt(\mu) \rc =
    \max \lc \mysmallfrac{\mu(x) - \mu_i(x)}{\mu(x)} \mid
    x \in \spt(\mu) ,\, \mu(x) \geqslant \mu_i(x) \rc \leqslant
    \mysmallfrac{d(\mu,\mu_i)}{\mu(\xmn)}
  \end{displaymath}
  By assumption, $\lim_{\, i \rightarrow \infty} \: d(\mu,\mu_i) =
  0$. Hence also $\lim_{\, i \rightarrow \infty} \: r_i = 0$, as was to
  be shown.
\end{proof}

\noindent
The following combinatorial result is helpful in the sequel. 

\begin{lemma}
  \label{flexibelDelen}
  Let $I$ and~$J$ be finite index sets, $p_i,q_j\in[0,1]$ and
  $\mu_i,\nu_j \in \Distr(X)$, for $i \in I$ and $j \in J$, such that
  $\bigoplusiinI\: p_i\mu_i = \bigoplusjinJ \: q_j \nu_j$. Then
  $r_{ij} \geqslant 0$ and $\varrho_{ij} \in \Distr(X)$ exist such that
  $\sumjinJ \: r_{ij} = p_i$ and
  $p_i \stardot \mu_i = \bigoplusjinJ \: r_{ij} \stardot \varrho_{ij}$
  for all~$i \in I$, and $\sumiinI \: r_{ij} = q_j$ and
  $q_j \stardot \nu_j = \bigoplusiinI \: r_{ij} \stardot \varrho_{ij}$
  for all~$j \in J$.
\end{lemma}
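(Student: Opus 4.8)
The plan is to recognise the claim as a discrete transportation (coupling) problem and to produce the required data by an explicit pointwise construction rather than by induction. Write $\mu = \bigoplusiinI p_i \mu_i = \bigoplusjinJ q_j \nu_j$ for the common distribution; since $\mu \in \Distr(X)$ its support $\spt(\mu)$ is finite, and for every $x \in X$ we have $\sumiinI p_i \mu_i(x) = \mu(x) = \sumjinJ q_j \nu_j(x)$ with all summands non-negative. The single idea driving the proof is the independent (product) coupling: for each $x$ the two families $(p_i \mu_i(x))_i$ and $(q_j \nu_j(x))_j$ are two ways of splitting the same mass $\mu(x)$, and they can be glued by the outer product divided by that total.

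Concretely, I would set, for $i \in I$ and $j \in J$,
\[
 w_{ij}(x) \;=\; \frac{p_i \mu_i(x)\cdot q_j \nu_j(x)}{\mu(x)} \quad\text{for } x \in \spt(\mu),
\]
and $w_{ij}(x) = 0$ otherwise; this is legitimate because $\mu(x) = 0$ forces $p_i\mu_i(x) = 0$, so nothing is lost off the support. Then define $r_{ij} = \sum_{x \in X} w_{ij}(x) \geqslant 0$ and let $\varrho_{ij} = \frac{1}{r_{ij}} w_{ij}$ whenever $r_{ij} > 0$, choosing $\varrho_{ij}$ to be an arbitrary fixed distribution when $r_{ij} = 0$ (its value is then irrelevant, being weighted by $0$). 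Each $\varrho_{ij}$ inherits finite support from $\mu$, hence lies in $\Distr(X)$, and $\sum_x \varrho_{ij}(x) = 1$ by the choice of the normalising constant $r_{ij}$.

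It then remains to verify the four identities, all of which reduce to a one-line interchange of summation. Summing $w_{ij}(x)$ over $j$ gives $\frac{p_i\mu_i(x)}{\mu(x)}\sumjinJ q_j\nu_j(x) = p_i\mu_i(x)$ for $x \in \spt(\mu)$, and $0 = p_i\mu_i(x)$ off the support, so $p_i \mu_i = \bigoplusjinJ r_{ij} \stardot \varrho_{ij}$; summing this instead first over $x$ yields $\sumjinJ r_{ij} = \sum_x p_i\mu_i(x) = p_i$, using that $\mu_i$ is a distribution. The two statements for the index $j$ follow by the symmetric computation, exchanging the roles of $(p_i,\mu_i,I)$ and $(q_j,\nu_j,J)$.

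I do not expect a genuine obstacle here: the construction is essentially forced once one views the equation as a pair of marginals of a joint distribution, and the verifications are routine. The only points demanding care are bookkeeping ones, namely restricting the division to $\spt(\mu)$ so that $w_{ij}$ is well defined, and assigning $\varrho_{ij}$ harmlessly when $r_{ij} = 0$; neither poses a real difficulty. An alternative, slightly longer route would induct on $|I| + |J|$, at each step transferring $\min\{p_i,q_j\}$-worth of a suitable common component, but the product coupling simply performs all such transfers at once.
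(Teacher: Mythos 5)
Your construction is exactly the one in the paper: the paper sets $r_{ij} = \sum_{x\in \spt(\xi)} p_i\mu_i(x)\,q_j\nu_j(x)/\xi(x)$ and $\varrho_{ij}(x) = p_i\mu_i(x)\,q_j\nu_j(x)/(r_{ij}\,\xi(x))$, which is precisely your product coupling $w_{ij}$ followed by normalisation, with the same handling of the case $r_{ij}=0$ and the same symmetry argument for the $j$-indexed identities. The proof is correct and essentially identical to the paper's.
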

\begin{proof}
  Let $\xi = \bigoplusiinI\: p_i\stardot\mu_i = \bigoplusjinJ \: q_j
  \stardot \nu_j$.
  We define
  $\textstyle r_{ij} = \sum_{x\in \mathit{spt}(\xi)} \:
  \displaystyle{\frac{\rule[-3pt]{0pt}{10pt} p_i \mkern1mu \mu_i(x)
      \cdot q_j \mkern1mu \nu_j(x)}{\rule{0pt}{7pt} \xi(x)}}$ for all
  $i \in I$ and $j \in J$.  In case $r_{ij} = 0$, choose
  $\varrho_{ij} \in \Distr(X)$ arbitrarily.  In case $r_{ij} \neq 0$,
  define $\varrho_{ij} \in \Distr(X)$, for $i \in I$ and $j \in J$, by
  \begin{displaymath}
    \varrho_{ij}(x) =
    \left\lbrace
      \begin{array}{cl}
        \displaystyle\frac
        {\rule[-3pt]{0pt}{10pt} p_i \mkern1mu \mu_i(x) \cdot
        q_j \mkern1mu \nu_j(x)}
        {\rule{0pt}{7pt} r_{ij} \, \xi(x)} \quad &
        \textrm{if } \xi(x)>0, \smallskip \\
        0 & \textrm{otherwise}
      \end{array}
    \right.
  \end{displaymath}
  for all $x \in X$. By definition of $r_{ij}$
  and~$\varrho_{ij}$ it holds that $\sum \lc \varrho_{ij}(x) \mid x
  \in X \rc = 1$. So, $\varrho_{ij} \in \Distr(X)$ indeed.

  We verify $\sumjinJ \: r_{ij} = p_i$ and
  $p_i \stardot \mu_i = \bigoplusjinJ \: r_{ij} \stardot \varrho_{ij}$ for~$i \in
  I$.
  \begin{align*}
    \sumjinJ \: r_{ij}
    & = \sumjinJ \: \textstyle \sum_{x \in \mathit{spt}(\xi)} \: p_i \mkern1mu
      \mu_i(x) \cdot q_j \mkern1mu \nu_j(x) / \xi(x) \\
    & = \textstyle \sum_{x \in \mathit{spt}(\xi)} \: p_i \mkern1mu
      \mu_i(x) \cdot \sumjinJ \: q_j \mkern1mu \nu_j(x) / \xi(x) \\
    & = \textstyle \sum_{x \in \mathit{spt}(\xi)} \: p_i \mkern1mu \mu_i(x)
    & (\text{since $\xi = \bigoplusjinJ \: q_j \stardot \nu_j$}) \\
    & = p_i \textstyle \sum_{x \in \mathit{spt}(\xi)} \: \mu_i(x) \\
    & = p_i \mkern1mu .
  \end{align*}
  Next, pick $y \in X$ and~$i \in I$. If $\xi(y) = 0$, then
  $p_i \mkern1mu \mu_i(y) = 0$, since
  $\xi(y) = \sumiinI \: p_i \mkern1mu \mu_i(y)$, and $r_{ij} = 0$ or
  $\varrho_{ij}(y) = 0$ for all~$j \in J$, by the various definitions,
  thus $\sumjinJ \: r_{ij} \mkern1mu \varrho_{ij}(y) = 0$ as well.

  Suppose $\xi(y) > 0$. Put $J_i = \lc j \in J \mid r_{ij} > 0 \rc$.
  If $j \in J \backslash J_i$, i.e.\ if $r_{ij} = 0$, then
  $p_i \mkern1mu \mu_i(y) q_j \mkern1mu \nu_j(y) / \xi(y) = 0$ by
  definition of~$r_{ij}$. Therefore we have
  \begin{align*}
    \sumjinJ \: r_{ij} \varrho_{ij}(y) \,
    & = \, \textstyle \sum_{j {\in} J_i} \: r_{ij} \mkern1mu \varrho_{ij}(y) \\
    & = \, \textstyle \sum_{j {\in} J_i} \: r_{ij} \mkern1mu p_i \mkern1mu 
      \mu_i(y) \cdot q_j \mkern1mu \nu_j(y) / (r_{ij} \mkern1mu \xi(y) ) \\
    & = \, \textstyle \sum_{j {\in} J_i} \: p_i \mkern1mu \mu_i(y) \cdot q_j
      \mkern1mu \nu_j(y) / \xi(y) \\
    & = \, \sumjinJ \: p_i \mkern1mu \mu_i(y) \cdot q_j \mkern1mu
      \nu_j(y) / \xi(y)
    & \text{(summand zero for $j \in J \backslash J_i$)} \\
    & = \, p_i \mkern1mu \mu_i(y) / \xi(y) \cdot
      \sumjinJ \: q_j \mkern1mu \nu_j(y)
    & \\
      & = \, p_i \mkern1mu \mu_i(y) \mkern1mu
    & (\text{since $\xi = \bigoplusjinJ \: q_j \stardot \nu_j$})
      \mkern1mu  .
  \end{align*}
  The statements $\sum_{i\in I} \: r_{ij}= q_j$ and
  $q_j \stardot \nu_j = \bigoplusiinI \: r_{ij} \stardot \varrho_{ij}$
  for~$j \in J$ follow by symmetry.
\end{proof}


\section{An elementary processes language}
\label{sec-process-language}

In this section we define a syntax and transition system semantics
for non-deterministic and probabilistic processes.
Depending on the top operator,
following~\cite{BS01:icalp}, a process is either a
non-deterministic process~$E \in \calE$, with constant~$\bfzero$,
prefix operators~$\alpha \pref {}$ and non-deterministic choice~$+$,
or a probabilistic process~$P \in \calP$, with the Dirac
operator~$\partial$ and probabilistic choices~$\oplusr$.

\begin{definition}[Syntax]
  \label{def-prob-processes}
  The classes $\calE$ and~$\calP$ of non-deterministic and
  probabilistic processes, respectively, over the set of
  actions~$\calA$, are given by
  \begin{displaymath}
    E \bnfeq \bfzero \mid \alpha \pref P \mid E + E
    \qquad \qquad
    P \bnfeq \partial(E) \mid P \oplusr P 
  \end{displaymath}
  with actions~$\alpha$ from~$\calA$ and where $0 \leqslant r
  \leqslant 1$.
\end{definition}

\noindent
We use $E, F, \ldots {}$ to range over~$\calE$ and $P, Q, \ldots {}$
to range over~$\calP$.
The probabilistic process $P_1 \mkern1mu \oplusr P_2$ behaves
as~$P_1$ with probability~$r$ and behaves as~$P_2$ with probability~$1-r$.

We introduce a complexity
measure~$c: \calE\cup \calP \to \IN$ for
non-deterministic and probabilistic processes based on the
size of a process. It is given by
$c(\bfzero) = 0$, $c(a \pref P) = c(P) + 1$, $c(E+F) = c(E)+c(F)$,
and~$c(\partial(E)) = c(E) + 1$, $c(P \oplusr Q) = c(P) + c(Q)$.

\paragraph{Examples}
As illustration, we provide the following pairs of 
non-deterministic processes, which are branching probabilistic bisimilar in 
the sense of Definition \ref{def-probabilistic-branching-bisimilar}. 
\begin{enumerate} [(i)]
\item $\mathbf{H_1} = a \pref \bigl( P \probc{\frac14} (
    P \probc{\frac13} Q ) \bigr)$ and $\mathbf{H_2} = a \pref \bigl( P
    \probc{\frac12} ( Q \probc{\frac12} Q ) \bigr)$
  \item $\mathbf{G_1} = a \pref ( P \probc{\frac12} Q )$ and
    $\mathbf{G_2} = a \pref \bigl( \partial \bigl( \tau \pref ( P
    \probc{\frac12} Q ) \bigl) \probc{\frac13} ( P \probc{\frac12} Q )
    \bigl)$
  \item $\mathbf{I_1} = a \pref \partial( b \pref P + \tau \pref Q )$
    and
    $\mathbf{I_2} = a \pref \partial( \tau \pref \partial( b \pref P +
    \tau \pref Q ) + b \pref P + \tau \pref Q)$
\end{enumerate}
The examples $\mathbf{H_1}$ and~$\mathbf{H_2}$ are taken
from~\cite{Hen12:facj}, and $\mathbf{G_1}$ and~$\mathbf{G_2}$ are taken
from~\cite{GGV19:fc}.
The processes $\mathbf{G_2}$ and~$\mathbf{I_2}$ contain a so-called inert
$\tau$-transition.

\blankline

\noindent
As usual, the SOS semantics for $\calE$ and~$\calP$ makes use of two
types of transition relations~\cite{HJ90:rtss,BS01:icalp,GGV19:fc}.

\pagebreak[3]

\begin{definition}[Operational semantics]
  \label{def-pr-transition-relation} \mbox{}
  \begin{enumerate}[(a)]
    \item 
    The transition relations ${\rightarrow} \subseteq
      \calE \times \calA \times \Distr(\calE)$ and ${\mapsto}
      \subseteq \calP \times \Distr(\calE)$ are given by\vspace{-2ex}
      \begin{displaymath}
        \begin{array}{c}
          \sosrule{P \mapsto \mu}{\alpha \pref P \arrow{\alpha} \mu}
          \: \textrm{\small\sc (pref)} \rule{0pt}{22pt}
          \medskip \\
          \sosrule{E_1 \arrow{\alpha} \mu_1}{E_1 + E_2 \arrow{\alpha}
            \mu_1}
          \: \textrm{\small\sc (nd-choice\,1)}
          \qquad
          \sosrule{E_2 \arrow{\alpha} \mu_2}{E_1 + E_2 \arrow{\alpha}
            \mu_2}
          \: \textrm{\small\sc (nd-choice\,2)}
          \bigskip \\
          \sosrule{}{\partial(E) \mapsto \delta(E)}
          \: \textrm{\small\sc (Dirac)}
          \qquad
          \sosrule{P_1 \mapsto \mu_1 \quad P_2 \mapsto
            \mu_2}{P_1 \oplusr P_2 \mapsto \mu_1 \oplusr \mu_2}
          \: \textrm{\small\sc (p-choice)}
        \end{array}
      \end{displaymath}
    \item 
      The transition relation ${\rightarrow}
      \subseteq \DistrE \times \calA \times \DistrE$ is such that $\mu
      \alphaarrow \mu'$ whenever $\mu = \bigoplusiinI \: p_i \stardot E_i$,
      $\mu' = \bigoplusiinI \: p_i \stardot \mu'_i $, and $E_i \alphaarrow
      \mu'_i$ for all $i \in I$.
      \vspace{-.5ex}
  \end{enumerate}
\end{definition}

\medskip

\noindent
In rule~\textsc{\small (Dirac)} of the relation ${\mapsto}$ we have
that the syntactic Dirac process~$\partial(E)$ is coupled to the
semantic Dirac distribution~$\delta(E)$. Similarly, in~\textsc{\small
  (p-choice)}, the syntactic probabilistic operator~$\oplusr$
in~$P_1 \oplusr P_2$ is replaced by semantic probabilistic composition
in $\mu_1 \oplusr \mu_2$. Thus, with each probabilistic
process~$P \in \calP$ we associate a
distribution~$\den{P} \in \DistrE$ as follows:
$\den{\partial(E)} = \delta(E)$ and
$\den{P \oplusr Q} = \den{P} \oplusr \den{Q}$, which is the
distribution $r \den{P} \oplus (1-r) \den{Q}$.

The relation~$\arrow{}$ for non-deterministic processes is
  finitely branching, but the relation~$\arrow{}$ for probabilistic
  processes is not. Following~\cite{SL94:concur,Seg95:thesis}, the
transition relation~$\rightarrow$ on distributions as given by
Definition~\ref{def-pr-transition-relation} allows for a probabilistic
combination of non-deterministic alternatives resulting in a so-called
combined transition. For example, for the process
$E = a \pref \mkern1mu (P \probc{\mytinyfrac12} Q) + a \pref \mkern1mu
(P \probc{\mytinyfrac13} Q)$ of~\cite{BS01:icalp}, we have that the
Dirac process
$\partial(E) = \partial( a \pref \mkern1mu (P \probc{\mytinyfrac12} Q)
+ a \pref \mkern1mu (P \probc{\mytinyfrac13} Q) )$ provides an
$a$-transition to $\den{P \probc{\mytinyfrac12} Q}$ as well as an
$a$-transition to $\den{P \probc{\mytinyfrac13} Q}$.
So, since we can represent the distribution~$\delta(E)$ by
$\delta(E) = \frac12 \delta(E) \oplus \frac12 \delta(E)$, the
distribution~$\delta(E)$ also has a combined transition
\begin{displaymath}
  \delta(E) 
  = \half \delta(E) \oplus \half \delta(E)
  \arrow{a} \half \den{P \probc{\mytinyfrac12} Q} \oplus \half \den{P
  \probc{\mytinyfrac13} Q} = \den{P \probc{\mytinyfrac{5}{12}} Q} .
\end{displaymath}
As noted in~\cite{Sto02:phd}, the ability to combine transitions is
crucial for obtaining transitivity of probabilistic process
equivalences that take internal actions into account.
\vspace{-1ex}

\paragraph{Example}
Referring to the examples of processes above, we have, e.g,
  \begin{align*}
    \mathbf{H_1} \colon \quad
    & \delta( a \pref \mkern1mu (P \probc{\mytinyfrac14} (P
      \probc{\mytinyfrac13} Q)) ) 
      \arrow{a} \den{P \probc{\mytinyfrac14} (P
      \probc{\mytinyfrac13} Q)} = 
      \half \den{P} \oplus \half \den{Q}
      \smallskip \\ 
    \mathbf{H_2} \colon \quad
    & \delta( a \pref \mkern1mu (P \probc{\mytinyfrac12} (Q
      \probc{\mytinyfrac12} Q)) )
      \arrow{a} \den{P \probc{\mytinyfrac12} (Q
      \probc{\mytinyfrac12} Q)} = 
      \half \den{P} \oplus \half \den{Q}
      \smallskip \\
    \mathbf{G_2} \colon \quad
    & a \pref \bigl(
      \partial \bigl( \tau \pref ( P \probc{\frac12} Q ) \bigl)
      \probc{\frac13} ( P \probc{\frac12} Q ) \bigl)
      \arrow{a}
      \delta \bigl( \tau \pref ( P \probc{\frac12} Q ) \bigl)
      \probc{\frac13} ( P \probc{\frac12} Q )
      \mkern1mu .
  \end{align*}
  Because a transition of a probabilistic process yields a
  distribution, the $a$-transitions of $\mathbf{H_1}$
  and~$\mathbf{H_2}$ have the same target. It is noted that
  $\mathbf{G_2}$ doesn't provide a further transition unless both its
  components $P$ and~$Q$ do so to match the transition of
  $\tau \pref ( P \probc{\frac12} Q )$.  \vspace{2ex}

\noindent
In preparation to the definition of the notion of branching
probabilistic {\bisimilarity} in Section~\ref{sec-props} we introduce
some notation.

\begin{definition}
  \label{def-weak-arrows}
  For $\mu, \mu' \mathbin\in \DistrE$ and $\alpha \mathbin\in \calA$
  we write $\mu \alphahidearrow \mu'$ iff (i)~$\mu \alphaarrow \mu'$,
  or (ii)~$\alpha = \tau$ and $\mu'=\mu$, or (iii)~$\alpha = \tau$ and
  there exist $\mu_1, \mu_2, \mu'_1, \mu'_2 \in \DistrE$ such that
  $\mu = \mu_1 \oplusr \mu_2$, $\mu' = \mu'_1 \oplusr \mu'_2$,
  \vspace{1pt} $\mu_1 \arrow{\tau} \mu_1'$ and $\mu_2 = \mu'_2$ for
  some $r \in (0,1)$.
\end{definition}

\noindent
Cases (i) and (ii) in the definition above correspond with the limits
$r = 1$ and $r = 0$ of case (iii).  We use $\Arrow{\,}$ to denote the
reflexive transitive closure of $\arrow{(\tau)}$.
A transition $\mu \tauhidearrow \mu'$ is called a partial transition,
and a transition $\mu \Rightarrow \mu'$ is called a weak transition.

\paragraph{Example}
  \begin{itemize}
  \item [(a)] According to Definition~\ref{def-weak-arrows} we have
    \vspace{-1ex}
    \begin{align*}
      &
      \textstyle\frac13 \delta( \tau \pref \mkern1mu (P
        \probc{\mytinyfrac12} Q)) \oplus 
      \textstyle\frac23 \den{ P \probc{\mytinyfrac12} Q }
      \  \arrow{(\tau)} \
      \textstyle\frac13 \den{P \probc{\mytinyfrac12} Q}
      \oplus
      \textstyle\frac23 \den{P \probc{\mytinyfrac12} Q}
      =
      \den{P \probc{\mytinyfrac12} Q}
      \mkern1mu .
    \end{align*}
    \vspace{-5ex}
  \item [(b)] There are typically multiple ways to construct a weak
    transition~$\Rightarrow$. Consider the weak transition
    $\frac12 \delta( \tau \pref \partial ( \tau \pref P )) \oplus
    \frac13 \delta( \tau \pref P ) \oplus \frac16 \den{P} \Arrow{\:}
    \den{P}$ which can be obtained, among uncountably many other
    possibilities, via
    \begin{align*}
      & \textstyle\frac12 \delta( \tau \pref \partial ( \tau \pref P ))
        \oplus
        \textstyle\frac13 \delta( \tau \pref P )
        \oplus
        \textstyle\frac16 \den{P}
        \tauhidearrow
        {} \smallskip \\
      & \qquad \qquad
        \textstyle\frac12 \delta ( \tau \pref P ))
        \oplus
        \textstyle\frac13 \delta( \tau \pref P )
        \oplus
        \textstyle\frac16 \den{P}
        =
        \textstyle\frac56 \delta( \tau \pref P )
        \oplus
        \textstyle\frac16 \den{P}
        \tauhidearrow
        \den{P},
      \intertext{or via}
      & \textstyle\frac12 \delta( \tau \pref \partial ( \tau \pref P ))
        \oplus
        \textstyle\frac13 \delta( \tau \pref P )
        \oplus
        \textstyle\frac16 \den{P}
        \tauhidearrow
        \textstyle\frac12 \delta( \tau \pref \partial ( \tau \pref P ))
        \oplus
        \textstyle\frac13 \delta( P )
        \oplus
        \textstyle\frac16 \den{P}
        = {} \smallskip \\
      & \qquad \qquad
        \textstyle\frac12 \delta( \tau \pref \partial ( \tau \pref P ))
        \oplus
        \textstyle\frac12 \delta( P )
        \tauhidearrow
        \textstyle\frac12 \delta ( \tau \pref P )
        \oplus
        \textstyle\frac12 \den{P}
        \tauhidearrow
        \textstyle\frac12 \den{P}
        \oplus
        \textstyle\frac12 \den{P}
        =
        \den{P}.
    \end{align*}
  \item [(c)] The distribution
    $\textstyle\frac12 \delta( \tau \pref \partial(\astop + \bstop)) \oplus
    \textstyle\frac12 \delta(a \pref \partial( \cstop ))$ doesn't admit a
    $\tau$-transition nor an $a$-transition. However, we have
      \begin{align*}
        & \textstyle\frac12 \delta( \tau \pref \partial(\astop + \bstop))
          \oplus
          \textstyle\frac12 \delta(a \pref \partial( \cstop ))
          \tauhidearrow
          {} \smallskip \\
        & \qquad \qquad
          \textstyle\frac12 \partial(\astop + \bstop)
          \oplus
          \textstyle\frac12 \delta(a \pref \partial( \cstop ))
          \arrow{\mkern2mu a}
          \textstyle\frac12 \delta(\bfzero) \oplus \textstyle\frac12
          \delta( \cstop ) 
          \mkern1mu .
      \end{align*}
    \end{itemize}

\noindent
The following lemma states that the transitions $\arrow{\alpha}$,
$\arrow{(\alpha)}$, and~$\Rightarrow$ of
Definitions~\ref{def-pr-transition-relation} and~\ref{def-weak-arrows}
can be probabilistically composed.

\begin{lemma}
  \label{lem-composition}
  Let, for a finite index set $I$, $\mu_i,\mu'_i \in \DistrE$ and
  $p_i \geqslant 0$ such that $\sum_{i\in I} \: p_i = 1$.
  \begin{enumerate} [(a)]
  \item [(a)] If $\mu_i \arrow{\alpha} \mu'_i$ for all $i\in I$, then
    $\bigoplusiinI \: p_i \stardot \mu_i \arrow{\alpha} \bigoplusiinI
    \: p_i \stardot \mu'_i$.
  \item [(b)] If $\mu_i \arrow{(\tau)} \mu'_i$ for all $i\in I$, then
    $\bigoplusiinI \: p_i \stardot \mu_i \arrow{(\tau)} \bigoplusiinI
    \: p_i \stardot \mu'_i$.
  \item [(c)] If $\mu_i \Arrow{} \mu'_i$ for all $i\in I$, then
    $\bigoplusiinI \: p_i \stardot \mu_i \Arrow{\;} \bigoplusiinI \:
    p_i \stardot \mu'_i$.
  \end{enumerate}
\end{lemma}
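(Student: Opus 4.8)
The plan is to establish the three items in order, each using the previous one. Item~(a) follows directly from the definition of $\arrow{\alpha}$ on distributions by merging the witnessing representations; item~(b) reduces to~(a) by separating each partial transition into an \emph{active} part that performs the $\tau$-step and an \emph{idle} part that stands still, and then re-aggregating; and item~(c) follows from~(b) by padding the weak transition sequences to a common length and composing them level by level.

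For~(a), I would unfold each hypothesis $\mu_i \arrow{\alpha} \mu'_i$ through Definition~\ref{def-pr-transition-relation}(b): for every $i$ there is a representation $\mu_i = \bigoplusjinJi q_{ij} \stardot E_{ij}$ with $E_{ij} \arrow{\alpha} \nu_{ij}$ and $\mu'_i = \bigoplusjinJi q_{ij} \stardot \nu_{ij}$. Collecting these over the finite index set $\lc (i,j) \mid i \in I,\, j \in J_i \rc$ gives $\bigoplusiinI p_i \stardot \mu_i = \bigoplusiinI \bigoplusjinJi (p_i q_{ij}) \stardot E_{ij}$ and $\bigoplusiinI p_i \stardot \mu'_i = \bigoplusiinI \bigoplusjinJi (p_i q_{ij}) \stardot \nu_{ij}$, whose weights sum to $\sumiinI p_i \sum_{j \in J_i} q_{ij} = 1$. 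Since each $E_{ij} \arrow{\alpha} \nu_{ij}$, this is exactly a witness for the required transition.

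For~(b), the key is to capture all three clauses of Definition~\ref{def-weak-arrows} uniformly by recording, for each $i$, a ratio $r_i \in [0,1]$ and distributions with $\mu_i = \mu_{i,1} \probc{r_i} \mu_{i,2}$, $\mu'_i = \mu'_{i,1} \probc{r_i} \mu_{i,2}$, and $\mu_{i,1} \arrow{\tau} \mu'_{i,1}$ whenever $r_i > 0$ (clauses~(i) and~(ii) being the endpoints $r_i = 1$ and $r_i = 0$). Writing $s = \sumiinI p_i r_i$, I would split $\bigoplusiinI p_i \stardot \mu_i$ as $\nu_1 \probc{s} \nu_2$, where $\nu_1$ normalises the active mass $\sumiinI p_i r_i \mu_{i,1}$ and $\nu_2$ the idle mass $\sumiinI p_i (1-r_i) \mu_{i,2}$. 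Applying~(a) to the subfamily $\lc i \mid r_i > 0 \rc$, whose normalised weights $p_i r_i / s$ sum to~$1$, yields $\nu_1 \arrow{\tau} \nu'_1$ with $\nu'_1$ normalising $\sumiinI p_i r_i \mu'_{i,1}$; a short computation then shows $\nu'_1 \probc{s} \nu_2 = \bigoplusiinI p_i \stardot \mu'_i$, so $\bigoplusiinI p_i \stardot \mu_i \arrow{(\tau)} \bigoplusiinI p_i \stardot \mu'_i$ by clause~(iii). The degenerate cases $s = 0$ and $s = 1$ collapse to clauses~(ii) and~(i) and are handled separately; this bookkeeping, namely choosing the right overall ratio~$s$ and correctly normalising the aggregated active and idle parts, is the main obstacle.

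Finally, for~(c) I would exploit that $\arrow{(\tau)}$ is reflexive, since clause~(ii) gives $\mu \arrow{(\tau)} \mu$. Hence each sequence $\mu_i = \mu_i^0 \arrow{(\tau)} \cdots \arrow{(\tau)} \mu_i^{n_i} = \mu'_i$ witnessing $\mu_i \Arrow{} \mu'_i$ can be extended by idle steps to the common length $n = \max_{i \in I} n_i$. Applying~(b) at each level $k < n$ to the family $(\mu_i^k \arrow{(\tau)} \mu_i^{k+1})_{i \in I}$ gives $\bigoplusiinI p_i \stardot \mu_i^k \arrow{(\tau)} \bigoplusiinI p_i \stardot \mu_i^{k+1}$, and composing these $n$ partial transitions yields $\bigoplusiinI p_i \stardot \mu_i \Arrow{} \bigoplusiinI p_i \stardot \mu'_i$, as required.
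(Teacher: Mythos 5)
Your proposal is correct and takes essentially the same approach as the paper: (a) by merging the witnessing representations over the combined index set, (b) by splitting each partial transition into active and idle parts, aggregating each with normalised weights, applying (a) to the active mass and concluding via clause (iii) with the degenerate cases treated separately, and (c) by padding the sequences to a common length via reflexivity and iterating (b). The only presentational differences are that the paper assumes WLOG that all $p_i > 0$ and uses the reversed convention in which its ratio $r_i$ measures the idle (``stay'') rather than the active (``go'') mass.
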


\begin{proof}
  Let $\mu = \bigoplusiinI \: p_i \stardot \mu_i$ and
  $\mu' = \bigoplusiinI \: p_i \stardot \mu'_i$. Without loss of
  generality, we may assume that $p_i>0$ for all~$i \in I$.

  (a) 
  Suppose $\mu_i \arrow{\alpha} \mu'_i$ for~all $i \in I$. Then, by
  Definition~\ref{def-pr-transition-relation},
  $\mu_i = \bigoplusjinJi \: p_{ij} \stardot E_{ij}$,
  $\mu'_i = \bigoplusjinJi \: p_{ij} \stardot \eta_{ij} $, and
  $E_{ij} \alphaarrow \eta_{ij}$ for~$j \in J_i$ for a suitable index
  set~$J_i$, $p_{ij} > 0$ and $\eta_{ij} \in \DistrE$.
  Define the index set~$K$ and probabilities~$q_k$ for~$k \in K$ by
  $K = \lc (i,j) \mid i \in I ,\, j \in J_i \rc$ and
  $q_{(i,j)} = p_i \mkern1mu p_{ij}$ for~$(i,j) \in K$, so that
  $\sum_{k {\in} K} \: q_k = 1$.
  Then we have $\mu = \bigopluskinK \: q_k \stardot E_{ij}$ and
  $\mu' = \bigopluskinK \: q_k \stardot \eta_{ij}$. Therefore, by
  Definition~\ref{def-pr-transition-relation}, it follows that
  $\mu \arrow{\alpha} \mu'$.

  (b)~Let $\mu_i \arrow{(\tau)} \mu'_i$ for all $i\in I$.  Then, for
  all $i \in I$, by Definition~\ref{def-weak-arrows}, there exists
  $r_i \in [0,1]$ and
  $\mu_i^{\rm stay}, \mu_i^{\rm go}, \mu''_i
  \mathbin\in \DistrE$, such that
  $\mu_i \mathbin= \mu_i^{\rm stay} \! \probc{r_i} \mu_i^{\rm go}$,
  $\mu'_i \mathbin= \mu_i^{\rm stay} \! \probc{r_i} \mu''_i$, and
  either $r_i \mathbin{=} 1$ or $\mu_i^{\rm go} \arrow{\tau} \mu_i''$.
  In case~$r_i = 0$ for all $i \in I$, we have that
  $\mu_i \arrow{\tau} \mu'_i$ for all $i \in I$, and thus
  $\mu \arrow{\tau} \mu'$ by the first claim of the lemma, and
  $\mu \tauhidearrow \mu'$ by Definition~\ref{def-weak-arrows}(i).
  In case $r_i = 1$ for all $i \in I$, we have $\mu' = \mu$ and thus
  $\mu \tauhidearrow \mu'$ by Definition~\ref{def-weak-arrows}(ii).
  Otherwise, let $I' := \lc i \in I \mid r_i<1 \rc$,
  $r = \sumiinI \: p_i\mycdot r_i$,
  $\mu^{\rm stay} := \bigoplusiinI \: \frac{p_i\mycdot r_i}{r}
  \stardot \mu_i^{\rm stay}$,
  $\mu^{\rm go} := \bigoplus_{i\in I'} \: \frac{p_i\mycdot
    (1-r_i)}{1-r} \stardot \mu_i^{\rm go}$ and
  $\mu'' := \bigoplus_{i\in I'} \: \frac{p_i\mycdot (1-r_i)}{1-r}
  \stardot \mu''_i$. Then $\mu^{\rm go} \arrow\tau \mu''$ by the first
  claim of the lemma. Moreover,
  $\mu = \mu^{\rm stay} \! \probc{r} \mu^{\rm go}$,
  $\mu' = \mu^{\rm stay} \! \probc{r} \mu''$ and $r \in (0,1)$.  So
  $\mu \tauhidearrow \mu'$ by Definition~\ref{def-weak-arrows}(iii).

  (c)~Let $\mu_i \Arrow{} \mu'_i$ for all $i\in I$.  As $I$ is
  finite and $\Arrow{}$ is reflexive, there exists an $n \in \IN$
  such that
  $\mu_i = \mu_i^{(0)} \arrow{(\tau)} \mu_i^{(1)} \arrow{(\tau)} \dots
  \arrow{(\tau)} \mu_i^{(n)} = \mu'_i$ for all $i\in I$. Now
  $\mu \Arrow{} \mu'$ follows by $n$ applications of the second
  statement of the lemma.
\end{proof}

\noindent
Likewise, the next lemma allows \emph{probabilistic decomposition} of
 transitions $\arrow{\alpha}$, $\arrow{(\alpha)}$ and $\Arrow$.

 \begin{lemma}
   \label{lem-decomposition}
   Let $\mu,\mu'\in \DistrE$ and
   $\mu = \bigoplusiinI \: p_i \stardot \mu_i$ with $p_i>0$
   for~$i \in I$.
   \begin{itemize}
   \item [(a)] If $\mu \arrow{\alpha} \mu'$, then there are $\mu'_i$
     for $i \in I$ such that $\mu_i \arrow{\alpha} \mu'_i$ for 
     $i \in I$ and $\mu' = \bigoplusiinI \: p_i \stardot \mu'_i$.
   \item [(b)] If $\mu \arrow{(\tau)} \mu'$, then there are $\mu'_i$ for
     $i \in I$ such that $\mu_i \arrow{(\tau)} \mu'_i$ for 
     $i \in I$ and $\mu' = \bigoplusiinI \: p_i \stardot \mu'_i$.
   \item [(c)] If $\mu \Arrow{} \mu'$, then there are $\mu'_i$ for
     $i \in I$ such that $\mu_i \Rightarrow \mu'_i$ for $i \in I$
     and $\mu' = \bigoplusiinI \: p_i \stardot \mu'_i$.
   \end{itemize}
 \end{lemma}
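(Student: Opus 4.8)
The three parts are clearly meant to be proven in sequence, with part~(a) as the combinatorial base case, part~(b) building on~(a) via the decomposition lemma (Lemma~\ref{flexibelDelen}), and part~(c) reducing to finitely many applications of~(b), mirroring exactly the structure of the preceding composition lemma (Lemma~\ref{lem-composition}).

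For part~(a), the plan is to unfold the hypothesis $\mu \arrow{\alpha} \mu'$ using Definition~\ref{def-pr-transition-relation}(b). This gives a representation $\mu = \bigopluskinK\: q_k \stardot E_k$, $\mu' = \bigopluskinK\: q_k \stardot \eta_k$ with $E_k \alphaarrow \eta_k$ for each $k$, for some finite index set $K$ and weights $q_k$. Now I have \emph{two} representations of $\mu$ as a convex combination of distributions: the given one $\bigoplusiinI\: p_i \stardot \mu_i$ and the one $\bigopluskinK\: q_k \stardot \delta(E_k)$ extracted from the transition. The idea is to feed these into Lemma~\ref{flexibelDelen} to obtain common refinement weights $r_{ik} \geqslant 0$ and distributions $\varrho_{ik}$ with $\sumkinK\: r_{ik} = p_i$, $p_i \stardot \mu_i = \bigopluskinK\: r_{ik} \stardot \varrho_{ik}$, and $\sumiinI\: r_{ik} = q_k$, $q_k \stardot \delta(E_k) = \bigoplusiinI\: r_{ik} \stardot \varrho_{ik}$. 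The second family of equations forces each $\varrho_{ik}$ (when $r_{ik}>0$) to be the Dirac distribution $\delta(E_k)$, since a convex combination equalling a Dirac must have all its components equal to that Dirac. Consequently $\mu_i = \frac{1}{p_i}\bigopluskinK\: r_{ik} \stardot \delta(E_k)$ is itself a convex combination of the $\delta(E_k)$, and since $E_k \alphaarrow \eta_k$, Definition~\ref{def-pr-transition-relation}(b) yields a transition $\mu_i \arrow{\alpha} \mu'_i$ where $\mu'_i := \frac{1}{p_i}\bigopluskinK\: r_{ik} \stardot \eta_k$. A direct computation, using $\sumiinI\: r_{ik} = q_k$ to recombine, then verifies $\bigoplusiinI\: p_i \stardot \mu'_i = \bigopluskinK\: q_k \stardot \eta_k = \mu'$.

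For part~(b), I would unfold $\mu \tauhidearrow \mu'$ via Definition~\ref{def-weak-arrows}. The clean cases are~(ii), where $\mu' = \mu$ and one takes $\mu'_i = \mu_i$ using reflexivity of $\arrow{(\tau)}$, and~(i), which reduces to part~(a) with $\alpha = \tau$. The substantive case is~(iii): there is a split $\mu = \mu_{\mathrm{stay}} \oplusr \mu_{\mathrm{go}}$, $\mu' = \mu_{\mathrm{stay}} \oplusr \mu''$ with $\mu_{\mathrm{go}} \arrow{\tau} \mu''$. Here I would again invoke Lemma~\ref{flexibelDelen}, now matching the given decomposition $\mu = \bigoplusiinI\: p_i \stardot \mu_i$ against the two-part decomposition $\mu = r \mkern1mu \mu_{\mathrm{stay}} \oplus (1-r)\mkern1mu \mu_{\mathrm{go}}$, to split each $\mu_i$ into a ``stay'' piece and a ``go'' piece. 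Applying part~(a) to the hard $\tau$-transition out of $\mu_{\mathrm{go}}$ (after decomposing it along the induced weights) produces the matching targets for the go-pieces, and recombining each $\mu_i$'s stay-piece with its transformed go-piece gives $\mu_i \tauhidearrow \mu'_i$ by Definition~\ref{def-weak-arrows}(iii). Part~(c) is then routine: since $I$ is finite and $\Arrow{}$ is the reflexive transitive closure of $\arrow{(\tau)}$, pad all the weak transitions $\mu_i \Arrow{} \mu'_i$ to a common length~$n$ and apply part~(b) $n$ times.

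\textbf{Main obstacle.}
The crux is part~(a), and specifically the bookkeeping needed to transfer a transition across the two different convex representations of $\mu$ via Lemma~\ref{flexibelDelen}. The delicate point is that the refinement distributions $\varrho_{ik}$ supplied by that lemma must be recognized as forced to be Dirac on the $E_k$ side, which is what licenses lifting the atomic transitions $E_k \alphaarrow \eta_k$ back to each component $\mu_i$. Once part~(a) is secured, the probabilistic decomposition in~(iii) of part~(b) is the only other place requiring care, since one must simultaneously decompose along the $(p_i)$ and along the stay/go split and check the resulting weights still sum correctly; but this is again a matter of applying Lemma~\ref{flexibelDelen} and reusing part~(a), not a genuinely new difficulty.
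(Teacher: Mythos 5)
Your proposal matches the paper's proof essentially step for step in parts (a) and (b): unfold the hypothesis of (a) via Definition~\ref{def-pr-transition-relation} to get a Dirac-based representation of $\mu$, feed the two representations into Lemma~\ref{flexibelDelen}, observe that the refinement distributions on the Dirac side are forced to equal $\delta(E_k)$, define $\mu'_i$ as the correspondingly weighted combination of the $\eta_k$, and recombine; then in (b) dispatch cases (i) and (ii) directly and reduce case (iii) to Lemma~\ref{flexibelDelen} plus part (a) applied to the $\tau$-transition of the ``go'' component, reassembling each $\mu_i \tauhidearrow \mu'_i$ via Definition~\ref{def-weak-arrows}(iii). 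This is precisely the paper's argument.

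One slip in part (c): your sentence about padding ``all the weak transitions $\mu_i \Arrow{} \mu'_i$ to a common length $n$'' describes the proof of the \emph{composition} lemma (Lemma~\ref{lem-composition}(c)), not the decomposition direction. Here the transitions $\mu_i \Arrow{} \mu'_i$ are the \emph{output} of the lemma, not its input, so there is nothing to pad; the given datum is a single chain $\mu = \eta_0 \tauhidearrow \eta_1 \tauhidearrow \cdots \tauhidearrow \eta_n = \mu'$. The correct (and routine) argument is to apply part (b) to the first step to decompose $\eta_1$ along the weights $p_i$, then apply part (b) again to $\eta_1 \tauhidearrow \eta_2$ using that decomposition, and so on, finally composing the resulting chains $\mu_i \tauhidearrow \eta_{1,i} \tauhidearrow \cdots \tauhidearrow \eta_{n,i} =: \mu'_i$ into $\mu_i \Rightarrow \mu'_i$. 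This is what the paper compresses into ``follows by transitivity from the second statement.'' With that repair your proof is complete and identical in substance to the paper's.
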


 \begin{proof}
   (a) Suppose $\mu \arrow{\alpha} \mu'$.  By
   Definition~\ref{def-pr-transition-relation}
   $\mu = \bigoplusjinJ \: q_j \stardot E_j$,
   $\mu' = \bigoplusjinJ \: q_j \stardot \eta_j$, and
   $E_j \alphaarrow \eta_j$ for all $j \in J$, for suitable index
   set~$J$, $q_j > 0$, $E_j \in \calE$, and~$\eta_j \in \DistrE$. By
   Lemma~\ref{flexibelDelen} there are $r_{ij} \geqslant 0$ and
   $\varrho_{ij} \in \DistrE$ such that $\sumjinJ \: r_{ij} = p_i$ and
   $p_i \mkern1mu \mu_i = \bigoplusjinJ \: {r_{ij}} \varrho_{ij}$
   for~$i \in I$, and $\sumiinI \: r_{ij} = q_j$ and
   $q_j \stardot \delta(E_j) = \bigoplusiinI \: r_{ij}
   \varrho_{ij}$ for all~$j \in J$. Hence, $\varrho_{ij}=\delta(E_j)$
   for $i \in I$,~$j \in J$.

   For all $i \in I$, let
   $\mu'_i = \bigoplusjinJ \: ({r_{ij}}/{p_i}) \mkern1mu \eta_j$. Then
   $\mu_i \arrow\alpha \mu'_i$, for all $i \in I$, by
   Lemma~\ref{lem-composition}(a). Moreover, it holds that
   $\bigoplusiinI \: p_i \mkern1mu \mu'_i = \bigoplusiinI \: p_i
   \stardot \bigoplusjinJ \: ({r_{ij}}/{p_i}) \mkern1mu \eta_j =
   \bigoplusjinJ \: \bigoplusiinI \: r_{ij} \stardot \eta_j =
   \bigoplusjinJ \: q_{j} \stardot \eta_j = \mu'$.

  (b)~Suppose $\mu \arrow{(\tau)} \mu'$. By
  Definition~\ref{def-weak-arrows}, either (i)~$\mu \tauarrow \mu'$,
  or (ii)~$\mu'=\mu$, or (iii)~there exist
  $\nu_1, \nu_2, \nu'_1, \nu'_2 \in \DistrE$ such that
  $\mu = \nu_1 \oplusr \nu_2$, $\mu' = \nu'_1 \oplusr \nu'_2$,
  $\nu_1 \arrow{\tau} \nu_1'$ and $\nu_2 = \nu'_2$ for some
  $r \in (0,1)$.  In case~(i), the required $\mu'_i$ exist by the
  first statement of this lemma.  In case~(ii) one can simply take
  $\mu'_i := \mu_i$ for all $i\in I$.  Hence assume that case~(iii)
  applies.  Let $J := \{1,2\}$, $q_1:=r$ and $q_2:=1-r$.  By
  Lemma~\ref{flexibelDelen} there are $r_{ij} \in [0,1]$ and
  $\varrho_{ij} \in \DistrE$ with $\sum_{j\in J} \: r_{ij} = p_i$ and
  $\mu_i = \bigoplusjinJ \: \frac{r_{ij}}{p_i} \stardot \varrho_{ij}$
  for all~$i \in I$, and $\sum_{i\in I} \: r_{ij}=q_j$ and
  $\nu_j = \bigoplusiinI \: \frac{r_{ij}}{q_j} \stardot \varrho_{ij}$
  for all~$j \in J$.

  Let $I' := \{i\in I\mid r_{i1}>0\}$.\vspace{1pt} Since
  $\nu_1 = \bigoplus_{i\in I'} \: \frac{r_{i1}}{r} \stardot
  \varrho_{i1}\arrow{\tau} \nu_1'$, by the first statement of the
  lemma, for all $i \in I'$ there are $\varrho_{i1}'$ such that
  $\varrho_{i1} \arrow{\tau} \varrho_{i1}'$ and
  $\nu'_1=\bigoplus_{i\in I'} \: \frac{r_{i1}}{r} \stardot
  \varrho_{i1}'$.\vspace{2pt} For all $i \in I{\setminus}I'$ pick
  $\varrho'_{i1} \in \DistrE$ arbitrarily.  It follows that
  \(\mu_i = \varrho_{i1} \probc{\frac{r_{i1}}{p_i}} \varrho_{i2}
  \tauhidearrow \varrho'_{i1} \probc{\frac{r_{i1}}{p_i}} \varrho_{i2}
  =: \mu'_i\) for all $i \in I$.  Moreover,
  \centerline{\begin{math}
    \bigoplusiinI \: p_i \stardot \mu'_i =
    \bigoplusiinI \: p_i \stardot (\varrho'_{i1}
    \probc{\frac{r_{i1}}{p_i}} \varrho_{i2}) = 
    (\bigoplusiinI \: \frac{r_{i1}}{r} \stardot \varrho'_{i1}) \oplusr
    (\bigoplusiinI \: \frac{r_{i2}}{1-r} \stardot \varrho_{i2}) =
    \nu'_1 \oplusr \nu_2 = \mu'
    \; .
  \end{math}}
  (c)~The last statement follows by transitivity from the second one.
 \end{proof}


\section{Branching probabilistic bisimilarity}
\label{sec-props}

In this section we recall the notion of branching probabilistic
bisimilarity~\cite{GGV19:fc}. The notion is based on a
decomposability property due to~\cite{DGHM09} and a transfer property.

\begin{definition}[Branching probabilistic {\bisimilarity}]
  \label{def-probabilistic-branching-bisimilar}
  \mbox{}
  \begin{enumerate} [(a)]
  \item A relation $\calR \subseteq \DistrE \times \DistrE$ is called
    \textit{weakly decomposable} iff it is symmetric and for all
    $\mu, \nu \in \DistrE$ such that $\mu \mopcalR \nu$ and
    $\mu = \bigoplusiinI \: p_i \stardot \mu_i$ there are
    $\nubar, \nu_i \in \DistrE$, for~$i \in I$, such that \smallskip
    \\ \centerline{\begin{math} \nu \Arrow{} \nubar,\ \mu \mopcalR
        \nubar,\ \nubar = \bigoplusiinI \: p_i \stardot \nu_i,\
        \text{and\,} \ \mu_i \mopcalR \nu_i \,\ \text{for
          all~$i \in I$.}
    \end{math}}
\item A relation $\calR \subseteq \DistrE \times \DistrE$ is called a
  \emph{branching} probabilistic bisimulation relation
  iff it is weakly decomposable and for all
  $\mu, \nu \in \DistrE$ with $\mu \mopcalR \nu$ and
  $\mu \arrow{\alpha} \mu'$, there are $\nubar, \nu' \in \DistrE$ such
  that\vspace{-1ex} \smallskip \\ \centerline{\begin{math} \nu
      \Arrow{} \nubar, \ \nubar \alphahidearrow \nu', \ \mu \mopcalR
      \nubar, \ \text{and} \,\ \mu' \mopcalR \nu'.
    \end{math}}
\item Branching probabilistic {\bisimilarity}
  ${\brbisim} \subseteq \DistrE \times \DistrE$ is defined as the
  largest branching probabilistic bisimulation relation on~$\DistrE$.
  \end{enumerate}
\end{definition}

\noindent
Note that branching probabilistic {\bisimilarity} is well-defined
following the usual argument that any union of branching probabilistic
bisimulation relations is again a branching probabilistic bisimulation
relation. In particular, (weak) decomposability is preserved under
arbitrary unions.
As observed in~\cite{GGV19:tue}, branching probabilistic bisimilarity
is an equivalence relation.

Two non-deterministic processes are considered to be branching
probabilistic bisimilar iff their Dirac distributions are, i.e., for
$E, F \in \calE$ we have $E \brbisim F$ iff
$\delta(E) \brbisim \delta(F)$.
Two probabilistic processes are considered to be branching
probabilistic bisimilar iff their associated distributions
over~$\calE$ are, i.e., for $P,Q \in \calP$ we have $P \brbisim Q$ iff
$\den P \brbisim \den Q$.

For a set $M\subseteq \DistrE$, the convex closure $cc(M)$ is defined
by
\begin{displaymath}
  cc(M) = \{\bigoplusiinI p_i\mu_i\mid \sum_{i\in I}p_i=1, ~ \mu_i\in M, ~
  I \textrm{ a finite index set}\}.
\end{displaymath}
For a relation $\calR \subseteq \DistrE \times \DistrE$ the convex
closure of $\calR$ is defined
by
\begin{displaymath}
  \mathit{cc}(\calR) =
  \lc \langle\bigoplusiinI p_i\mu_i,\bigoplusiinI p_i\nu_i\rangle
  \mid \mu_i\calR \nu_i, ~\sum_{i\in I}p_i=1, ~
  I\textrm{ a finite index set} \rc \mkern1mu .
\end{displaymath}

\noindent
The notion of weak decomposability has been adopted
from~\cite{Hen12:facj,LV16}. The underlying idea stems
from~\cite{DGHM09}. Weak decomposability provides a convenient
dexterity to deal with combined transitions as well as with
sub-distributions. For example, regarding sub-distributions, to
distinguish the probabilistic process
$\frac12 \partial( a \pref \partial(\bfzero)) \oplus \frac12 \partial(
b \pref \partial(\bfzero))$ from $\partial(\bfzero)$ a branching
probabilistic bisimulation relation relating
$\frac12 \delta( a \pref \partial(\bfzero)) \oplus \frac12 \delta( b
\pref \partial(\bfzero))$ and~$\delta(\bfzero)$ is by weak
decomposability also required to relate
$\delta( a \pref \partial(\bfzero))$ and $\delta( b \pref \partial(\bfzero))$
to subdistributions of a weak
descendant of~$\delta(\bfzero)$, which can only be
$\delta(\bfzero)$~itself. Since $\delta( a \pref \partial(\bfzero))$
has an $a$-transition while $\delta(\bfzero)$ has not, and
similar for a $b$-transition of
$\delta( b \pref \partial(\bfzero))$, it follows that
$\frac12 \partial( a \pref \partial(\bfzero)) \oplus \frac12 \partial(
b \pref \partial(\bfzero))$ and $\partial(\bfzero)$ are not branching
probabilistic bisimilar.

By comparison, on finite processes, as used in this paper, the notion
of branching probabilistic bisimilarity of Segala \&
Lynch~\cite{SL94:concur} can be defined in our framework exactly as in
(b) and~(c) above, but taking a decomposable instead of a weakly
decomposable relation, i.e.\ if $\mu \mopcalR \nu$ and
$\mu = \bigoplusiinI \: p_i \mu_i$ then there are~$\nu_i$ for~$i \in I$
such that $\nu = \bigoplusiinI \: p_i \nu_i$ and
$\mu_i \mopcalR \nu_i$ for~$i \in I$. This yields a strictly finer
equivalence.

\noindent
\paragraph{Example}
  \label{ex-prob-bisim}
  \begin{enumerate} [(a)]
  \item The distributions
    $\delta( \mathbf{G_1} ) = \delta( a \pref \mkern1mu ( P
    \probc{\mytinyfrac12} Q ))$
    and~$\delta( \mathbf{G_2} ) = \delta( a \pref \mkern1mu ( \partial
    ( \tau \pref \mkern1mu ( P \probc{\mytinyfrac12} Q ) )
    \probc{\mytinyfrac13} ( P \probc{\mytinyfrac12} Q ) ))$ both admit
    at the top level
    an $a$-transition only:
    \begin{align*}
      \delta( a \pref \mkern1mu ( P \probc{\mytinyfrac12} Q ))
      & \arrow{a}
        \half \den{P} \oplus \half \den{Q} \\
      \delta( a \pref \mkern1mu ( \partial (
      \tau \pref \mkern1mu ( P \probc{\mytinyfrac12} Q ) )
      \probc{\mytinyfrac13} ( P \probc{\mytinyfrac12} Q ) ))
      & \arrow{a}
        \ttfrac13 \delta ( \tau \pref \mkern1mu
        ( P \probc{\mytinyfrac12} Q ) )
        \oplus
        \ttfrac13 \den{P} \oplus
        \ttfrac13 \den{Q}
        \mkern1mu .
    \end{align*}
    Let the relation $\calR$ contain the pairs
    \begin{displaymath}
      \langle
      \delta( \tau \pref \mkern1mu ( P \probc{\mytinyfrac12} Q ) ),
      \ttfrac12 \den{P} \oplus \ttfrac12 \den{Q}
      \rangle
      \quad \text{and} \quad
      \langle \mu, \mu \rangle
      \text{ for $\mu \in \DistrE$}.
    \end{displaymath}
    The symmetric closure $\calR^\dagger$ of $\calR$
    is clearly a branching probabilistic bisimulation relation. We
    claim that therefore also its convex closure
    $\mathit{cc}(\calR^\dagger)$ is a branching probabilistic
    bisimulation relation.
    Considering that
    $\langle \delta( \tau \pref \mkern1mu ( P \probc{\mytinyfrac12} Q )
    ), \frac12 \den{P} \oplus \frac12 \den{Q} \rangle$ and
    $\langle \frac12 \den{P} \oplus \frac12 \den{Q}, \frac12
    \den{P} \oplus \frac12 \den{Q} \rangle$ are in $\calR$, we have that
  \begin{displaymath}
    \langle
    \ttfrac13 \delta( \tau \pref \mkern1mu ( P \probc{\mytinyfrac12} Q
    ) \oplus
    \ttfrac23 ( \ttfrac12 \den{P} \oplus \ttfrac12 \den{Q} ),
    \ttfrac13( \ttfrac12 \den{P} \oplus \ttfrac12 \den{Q} ) \oplus
    \ttfrac23 ( \ttfrac12 \den{P} \oplus \ttfrac12 \den{Q} ) )    
    \rangle \in \mathit{cc}(\calR^\dagger) {\mkern1mu .}
  \end{displaymath}
  Adding the pair of processes
  $\langle \delta( a \pref \mkern1mu ( P \probc{\mytinyfrac12} Q )),
  \delta( a \pref \mkern1mu ( \partial ( \tau \pref \mkern1mu ( P
  \probc{\mytinyfrac12} Q ) ) \probc{\mytinyfrac13} ( P
  \probc{\mytinyfrac12} Q ) )) \rangle$ and closing for symmetry,
  then yields a branching probabilistic bisimulation relation relating
  $\delta( \mathbf{G_1} )$ and $\delta( \mathbf{G_2} )$.
    
\item [(b)] The $a$-derivatives of $\mathbf{I_1}$ and~$\mathbf{I_2}$,
  i.e.\ the distributions $I'_1 = \delta( b \pref P + \tau \pref Q )$
  and
  $I_2' = \delta( \tau \pref \partial( b \pref P + \tau \pref Q ) + b
  \pref P + \tau \pref Q)$ are branching probabilistic bisimilar. A
  $\tau$-transition of~$I'_2$ partially based on its left branch, can
  be simulated by~$I'_1$ by a partial transition:
  \begin{displaymath}
    \begin{array}{r@{\:}c@{\:}l@{\quad}c@{\quad}l}
      I'_2
      & \myeq & r {\cdot} \den{ I'_2 } \oplus (1-r) {\cdot} \den{I'_2}
      & \arrow{~\tau~}
      & r {\cdot} \mkern1mu
        \delta( b \pref P + \tau \pref Q ) \oplus (1-r) {\cdot}
        \den{ Q } \smallskip \\
      I'_1
      & \myeq & r {\cdot} \den{ I'_1 } \oplus (1-r) {\cdot} \den{I'_1}
      & \arrow{(\tau)}
      & r {\cdot} \den{ I'_1 } \oplus (1-r) {\cdot} \den{ Q }
        \; \myeq \;
        r {\cdot}
        \delta( b \pref P + \tau \pref Q ) \oplus (1-r) {\cdot}
        \den{ Q }
        \mkern1mu .
    \end{array}
  \end{displaymath}
  A $\tau$-transition of~$I'_1$ can be directly simulated by~$I'_2$ of
  course. It follows that the relation
  $\calR = \singleton{ \langle{ \delta( \mathbf{I_1} ), \delta(
      \mathbf{I_2} ) }\rangle, \langle{ I'_1, I'_2 }\rangle }^\dagger
  \cup \lc \langle \mu, \mu \rangle \mid \mu \in \DistrE \rc$, the
  symmetric relation containing the pairs mentioned and the diagonal
  of~$\DistrE$, constitutes a branching probabilistic bisimulation
  relation containing $\mathbf{I_1}$ and~$\mathbf{I_2}$.
  \end{enumerate}

\noindent
In the sequel we frequently need that probabilistic composition
respects branching probabilistic bisimilarity of distributions, i.e.\
if, with respect to some index set~$I$, we have distributions $\mu_i$
and~$\nu_i$ such that $\mu_i \brbisim \nu_i$ for~$i \in I$, then
also~$\mu \brbisim \nu$ for the distributions
$\mu = \bigoplusiinI \: p_i \mu_i$ and
$\nu = \bigoplusiinI \: p_i \nu_i$. The property directly follows from
the following lemma, which is proven in~\cite{GGV19:tue}.

\begin{lemma}
  \label{lemma-congruence-for-prob-composition}
  Let distributions $\mu_1,\mu_2,\nu_1,\nu_2\in \DistrE$ and
  $0 \leqslant r \leqslant 1$ be such that $\mu_1 \brbisim \nu_1$ and
  $\mu_2 \brbisim \nu_2$. Then it holds that
  $\mu_1 \probc{r} \mu_2 \brbisim \nu_1 \probc{r} \nu_2$.
\end{lemma}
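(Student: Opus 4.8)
The plan is to show that the convex closure of branching probabilistic bisimilarity, extended with the diagonal, is itself a branching probabilistic bisimulation relation. Concretely, I would consider the relation
\begin{displaymath}
  \calR = \mathit{cc}({\brbisim}) \cup \lc \langle \mu, \mu \rangle \mid \mu \in \DistrE \rc,
\end{displaymath}
which is symmetric because $\brbisim$ is symmetric and convex combination preserves symmetry. Since $\mu_1 \probc{r} \mu_2 = r \mu_1 \oplus (1-r) \mu_2$ is by definition a convex combination of the pairs $\mu_1 \brbisim \nu_1$ and $\mu_2 \brbisim \nu_2$, the pair $\langle \mu_1 \probc{r} \mu_2, \nu_1 \probc{r} \nu_2 \rangle$ lies in $\mathit{cc}({\brbisim}) \subseteq \calR$. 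Hence if I can show $\calR \subseteq {\brbisim}$, the lemma follows immediately, because $\brbisim$ is the largest branching probabilistic bisimulation relation and thus contains any branching probabilistic bisimulation. So everything reduces to verifying that $\calR$ satisfies the two clauses of Definition~\ref{def-probabilistic-branching-bisimilar}.

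For the verification, take a pair $\langle \mu, \nu \rangle \in \mathit{cc}({\brbisim})$, say $\mu = \bigopluskinK \: p_k \mu^k$ and $\nu = \bigopluskinK \: p_k \nu^k$ with $\mu^k \brbisim \nu^k$ for each~$k$ in a finite index set~$K$. \textbf{Weak decomposability:} suppose $\mu = \bigoplusiinI \: q_i \stardot \sigma_i$ is any splitting of~$\mu$. The idea is to use Lemma~\ref{flexibelDelen} on the two representations $\bigopluskinK \: p_k \mu^k = \bigoplusiinI \: q_i \sigma_i$ to obtain a common refinement, distributing each $p_k \mu^k$ across the pieces $\sigma_i$. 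For each~$k$, weak decomposability of $\brbisim$ applied to $\mu^k \brbisim \nu^k$ along its induced splitting yields a weak descendant $\nu \Arrow{} \nubar^k$ that decomposes compatibly and stays related to $\mu^k$. I then recompose these witnesses using Lemma~\ref{lem-composition}(c) to build a single $\nubar = \bigopluskinK \: p_k \nubar^k$ with $\nu \Arrow{} \nubar$, and regroup the pieces so that $\nubar = \bigoplusiinI \: q_i \stardot \tau_i$ with each $\sigma_i \mathrel{\calR} \tau_i$ (the regrouped pieces are convex combinations of $\brbisim$-related components, hence in $\mathit{cc}({\brbisim})$), and with $\mu \mathrel{\calR} \nubar$. \textbf{Transfer:} for a transition $\mu \arrow{\alpha} \mu'$, apply Lemma~\ref{lem-decomposition}(a) to get $\mu^k \arrow{\alpha} (\mu')^k$ with $\mu' = \bigopluskinK \: p_k (\mu')^k$, invoke the transfer clause on each $\mu^k \brbisim \nu^k$, and recompose the resulting weak transitions with Lemma~\ref{lem-composition}, keeping the matched pieces in $\mathit{cc}({\brbisim})$.

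The main obstacle is the bookkeeping in weak decomposability: I am given an \emph{arbitrary} splitting of the convex combination $\mu$, which need not respect the convex structure $\bigopluskinK \: p_k \mu^k$, so I must reconcile two independent partitions of the same distribution. This is exactly what Lemma~\ref{flexibelDelen} is designed for — it produces coefficients $r_{ki} \geqslant 0$ and distributions $\varrho_{ki}$ giving a common refinement of the splittings indexed by~$k$ and by~$i$ simultaneously. The delicate point is that after obtaining the per-$k$ witnesses $\nubar^k$ and their sub-decompositions, the regrouped target pieces $\tau_i$ must be genuine convex combinations of $\brbisim$-related distributions so that each $\langle \sigma_i, \tau_i \rangle$ lands in $\mathit{cc}({\brbisim})$; verifying that the recomposed weights sum correctly and that the relation $\calR$ is closed under exactly these operations is where the care lies. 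Since Lemma~\ref{lemma-congruence-for-prob-composition} is cited as being proven in~\cite{GGV19:tue}, I would keep the present sketch at the level of indicating these ingredients rather than carrying out the full index manipulation.
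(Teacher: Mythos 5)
There is no in-paper proof to compare against here: the paper states this lemma and explicitly defers its proof to the technical report~\cite{GGV19:tue}. Judged on its own, your proposal takes the natural route and it is correct: the relation $\calR = \mathit{cc}({\brbisim}) \cup \lc \langle \mu, \mu \rangle \mid \mu \in \DistrE \rc$ is indeed a branching probabilistic bisimulation, and this convex-closure technique is the same one the paper itself appeals to informally in its example in Section~\ref{sec-props} (``therefore also its convex closure $\mathit{cc}(\calR^\dagger)$ is a branching probabilistic bisimulation relation''). The bookkeeping you flag as the delicate point does go through. Writing $\mu = \bigopluskinK \: p_k \stardot \mu^k = \bigoplusiinI \: q_i \stardot \sigma_i$ with all $p_k, q_i > 0$ (zero components may be discarded), Lemma~\ref{flexibelDelen} yields $r_{ki}$ and $\varrho_{ki}$ with $\mu^k = \bigoplusiinI \, (r_{ki}/p_k) \stardot \varrho_{ki}$ and $\sigma_i = \bigopluskinK \, (r_{ki}/q_i) \stardot \varrho_{ki}$; weak decomposability of $\brbisim$ applied to each $\mu^k \brbisim \nu^k$ along the first of these splittings gives $\nu^k \Arrow{} \nubar^k = \bigoplusiinI \, (r_{ki}/p_k) \stardot \nu^k_i$ with $\mu^k \brbisim \nubar^k$ and $\varrho_{ki} \brbisim \nu^k_i$; Lemma~\ref{lem-composition}(c) composes these to $\nu \Arrow{} \nubar := \bigopluskinK \: p_k \stardot \nubar^k$, and regrouping the double sum gives $\nubar = \bigoplusiinI \: q_i \stardot \tau_i$ with $\tau_i = \bigopluskinK \, (r_{ki}/q_i) \stardot \nu^k_i$, so that $\langle \mu, \nubar \rangle$ and each $\langle \sigma_i, \tau_i \rangle$ lie in $\mathit{cc}({\brbisim})$, exactly as Definition~\ref{def-probabilistic-branching-bisimilar}(a) demands. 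The transfer clause works analogously via Lemma~\ref{lem-decomposition}(a), with one small point worth making explicit in a full write-up: composing the matching moves $\nubar^k \alphahidearrow (\nu')^k$ into $\nubar \alphahidearrow \nu'$ uses Lemma~\ref{lem-composition}(a) when $\alpha \neq \tau$ (where $\alphahidearrow$ coincides with $\alphaarrow$) and Lemma~\ref{lem-composition}(b) when $\alpha = \tau$. Two cosmetic remarks: the diagonal you add to $\calR$ is redundant, since $\brbisim$ is reflexive and hence already contained in $\mathit{cc}({\brbisim})$; and what you have written is, as you acknowledge, a sketch --- but every step of it can be discharged, so the only thing separating it from a complete proof is carrying out the index manipulation you describe.
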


\noindent
We apply the above property in the proof of the next result. In
the sequel any application of
Lemma~\ref{lemma-congruence-for-prob-composition} will be done
tacitly.

\begin{lemma}
  \label{lemma-weak-transfer}
  Let $\mu, \nu \in \DistrE$ such that $\mu \brbisim \nu$ and
  $\mu \Arrow{\:} \mu'$ for some $\mu' \in \DistrE$. Then there are
  $\nu' \in \DistrE$ such that $\nu \Arrow{\:} \nu'$ and
  $\mu' \brbisim \nu'$.
\end{lemma}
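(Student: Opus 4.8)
The plan is to reduce the weak transition $\mu \Arrow{} \mu'$ to its constituent partial $\tau$-steps and to prove a single-step version of the lemma that can then be iterated. Since $\Arrow{}$ is the reflexive-transitive closure of $\arrow{(\tau)}$, I would write $\mu = \mu^{(0)} \arrow{(\tau)} \mu^{(1)} \arrow{(\tau)} \cdots \arrow{(\tau)} \mu^{(n)} = \mu'$ and argue by induction on~$n$. The base case $n = 0$ is immediate, taking $\nu' := \nu$. For the inductive step it suffices to establish the one-step claim: if $\mu \brbisim \nu$ and $\mu \arrow{(\tau)} \mu''$, then there is $\nu''$ with $\nu \Arrow{} \nu''$ and $\mu'' \brbisim \nu''$. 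Peeling off the first step $\mu \arrow{(\tau)} \mu^{(1)}$, applying the one-step claim to obtain $\nu^{(1)}$ with $\nu \Arrow{} \nu^{(1)}$ and $\mu^{(1)} \brbisim \nu^{(1)}$, then applying the induction hypothesis to $\mu^{(1)} \brbisim \nu^{(1)}$ and the length-$(n{-}1)$ transition $\mu^{(1)} \Arrow{} \mu'$, and finally using transitivity of $\Arrow{}$, closes the argument.

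For the one-step claim I would split on the three cases of Definition~\ref{def-weak-arrows}. Case~(ii), where $\mu'' = \mu$, is trivial with $\nu'' := \nu$. Case~(i), where $\mu \arrow{\tau} \mu''$, is exactly the transfer property of Definition~\ref{def-probabilistic-branching-bisimilar}(b) instantiated at $\alpha = \tau$: it yields $\nubar, \nu''$ with $\nu \Arrow{} \nubar \arrow{(\tau)} \nu''$, hence $\nu \Arrow{} \nu''$, together with $\mu'' \brbisim \nu''$.

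The substantive case is~(iii), where $\mu = \mu_1 \oplusr \mu_2$, $\mu'' = \mu'_1 \oplusr \mu_2$ and $\mu_1 \arrow{\tau} \mu'_1$ for some $r \in (0,1)$. Here I would first invoke weak decomposability (Definition~\ref{def-probabilistic-branching-bisimilar}(a)) of $\brbisim$ on $\mu \brbisim \nu$ with the decomposition $\mu = \mu_1 \oplusr \mu_2$ to obtain $\nubar, \nu_1, \nu_2$ with $\nu \Arrow{} \nubar$, $\nubar = \nu_1 \oplusr \nu_2$, $\mu_1 \brbisim \nu_1$ and $\mu_2 \brbisim \nu_2$. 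Applying the transfer property to $\mu_1 \arrow{\tau} \mu'_1$ and $\mu_1 \brbisim \nu_1$ gives $\nu'_1$ with $\nu_1 \Arrow{} \nu'_1$ and $\mu'_1 \brbisim \nu'_1$. Setting $\nu'' := \nu'_1 \oplusr \nu_2$, Lemma~\ref{lem-composition}(c) combines $\nu_1 \Arrow{} \nu'_1$ with the reflexive $\nu_2 \Arrow{} \nu_2$ to give $\nubar \Arrow{} \nu''$, so $\nu \Arrow{} \nu''$ by transitivity, while congruence (Lemma~\ref{lemma-congruence-for-prob-composition}) yields $\mu'' = \mu'_1 \oplusr \mu_2 \brbisim \nu'_1 \oplusr \nu_2 = \nu''$.

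I expect the main obstacle to be case~(iii): a partial $\tau$-step of $\mu$ touches only the $r$-fraction $\mu_1$ of its mass, so matching it requires splitting $\nu$ consistently with the split of $\mu$. This is precisely what weak decomposability supplies, but only after a preliminary weak transition $\nu \Arrow{} \nubar$, and the matching pieces must then be reassembled using the composability of weak transitions (Lemma~\ref{lem-composition}(c)) and congruence of $\brbisim$ under $\oplusr$ (Lemma~\ref{lemma-congruence-for-prob-composition}).
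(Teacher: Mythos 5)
Your proof is correct and follows essentially the same route as the paper's: the paper likewise handles a single partial $\tau$-step by applying weak decomposability to the split $\mu = \mu_1 \oplusr \mu_2$, then the transfer property to $\mu_1 \arrow{\tau} \mu'_1$, and reassembles with Lemma~\ref{lem-composition}(c) and Lemma~\ref{lemma-congruence-for-prob-composition}. The only difference is presentational: the paper treats just case~(iii) of Definition~\ref{def-weak-arrows} (cases (i) and (ii) being its limits $r=1$, $r=0$) and leaves the iteration over $\Arrow{}$ implicit, whereas you spell out the case split and the induction.
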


\begin{proof}
  We check that a partial transition $\mu \arrow{(\tau)} \mu'$ can be
  matched by~$\nu$ given $\mu \brbisim \nu$. So, suppose
  $\mu = \mu_1 \probc{r} \mu_2$, $\mu_1 \arrow{\tau} \mu'_1$, and
  $\mu' = \mu'_1 \probc{r} \mu_2$. By weak decomposability
  of~$\brbisim$ we can find distributions~$\nubar, \nu_1, \nu_2$ such
  that $\nu \Arrow{\:} \nubar = \nu_1 \probc{r} \nu_2$ and
  $\mu \brbisim \nubar$, $\nu_1 \brbisim
  \mu_1$,~$\nu_2 \brbisim \mu_2$. Choose
  distributions~$\nubar_1, \nubar'_1$ such that
  $\nu_1 \Arrow{\:} \nubar_1 \arrow{(\tau)} \nu'_1$ and
  $\nubar_1 \brbisim \mu_1$, $\nu'_1 \brbisim \mu'_1$. Put
  $\nu' = \nu'_1 \probc{r} \nu_2$. Then $\nu \Arrow{\:} \nu'$,
  using Lemma~\ref{lem-composition}c, and we
  have by Lemma~\ref{lemma-congruence-for-prob-composition} that
  $\nu' = \nu'_1 \probc{r} \nu_2 \brbisim \mu'_1 \probc{r} \mu_2 =
  \mu'$ since $\nu'_1 \brbisim \mu'_1$ and $\nu_2 \brbisim \mu_2$.
\end{proof}


\section{Branching probabilistic bisimilarity is continuous}
\label{sec-continuity}

Fix a finite set of non-deterministic processes
$\calF\mathbin\subseteq \calE$ that is \emph{transition closed}, in
the sense that if $E \in \calF$ and
$E \arrow\alpha \bigoplusiinI p_i{\cdot}F_i$ then also
$F_i \in \calF$.
Consequently, if $\mu \in \DistrF$ and $\mu \alphahidearrow \mu'$
then~$\mu' \in \DistrF$. Also, if $\mu \in \DistrF$ and
$\mu \Arrow{} \mubar$ then~$\mubar \in \DistrF$.
By Theorem~\ref{thm-seq-compact}
$\DistrF$ is a sequentially compact subspace of the complete metric
space~$\DistrE$, meaning that every sequence
$( \mu_i )_{i{=}0}^\infty$ in~$\DistrF$ has a subsequence
$( \mu_{i_k} )_{k{=}0}^\infty$ such that
$\lim_{\, k \rightarrow \infty} \: \mu_{i_k} = \mu$ for some
distribution~$\mu \in \DistrF$.
In particular, if $\lim_{\, i \rightarrow \infty} \: \mu_i = \mu$ and
$\mu_i \in \DistrF$, then also~$\mu \in \DistrF$, i.e.\ $\DistrF$ is a
closed subset of~$\DistrE$.
Due to the finitary nature of our process algebra, each distribution
$\mu \in \Distr(\calE)$ occurs in $\Distr(\calF)$ for some such
$\calF$, based on~$\spt(\mu)$. 

In the following three lemmas we establish a number of continuity
results. Assume
$\lim_{\, i \rightarrow \infty} \nu_i = \nu$. Then Lemma~\ref{partial
  transitions closed 1} states that, for a Dirac
distribution~$\delta(E)$, if $\delta(E) \arrow\alpha \nu_ i$
for~$i \in \IN$ then also $\delta(E) \arrow\alpha
\nu$. Lemma~\ref{partial transitions closed 2} extends this and shows
that, for a general distribution~$\mu$, if $\mu \arrow\alpha \nu_ i$
for~$i \in \IN$ then $\mu \arrow\alpha \nu$. Finally,
Lemma~\ref{partial transitions closed 3} establishes the limit case:
if $\lim_{\, i \rightarrow \infty} \mu_i = \mu$ and
$\mu _i\arrow\alpha \nu_ i$ for~$i \in \IN$ then
$\mu \arrow\alpha \nu$.

\begin{lemma}
  \label{partial transitions closed 1}
  Let $E \in \calF$ be a non-deterministic process, $\alpha \in \calA$ an
  action, $(\nu_i)_{i=0}^\infty \in \DistrF^\infty$ an infinite
  sequence in~$\DistrF$, and $\nu \in \DistrF$ a distribution
  \vspace{2pt} satisfying $\lim_{\, i \rightarrow \infty} \nu_i =
  \nu$. If, for all $i \in \IN$, $\delta(E) \alphahidearrow \nu_i$
  then it holds that $\delta(E) \alphahidearrow \nu$.
\end{lemma}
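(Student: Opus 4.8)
The plan is to give an explicit description of the set $T_{E,\alpha} := \lc \nu \in \DistrF \mid \delta(E) \alphahidearrow \nu \rc$ of all partial-transition targets of the Dirac distribution $\delta(E)$, to show that it is the convex hull of a \emph{finite} subset of $\DistrF$, and then to invoke the closedness of such a convex hull to conclude that it already contains the limit~$\nu$.

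First I would exploit that the relation $\arrow{}$ on non-deterministic processes is finitely branching: the process~$E$ has only finitely many direct $\alpha$-successors $\eta_1, \ldots, \eta_m \in \DistrF$, and these lie in~$\DistrF$ because $\calF$ is transition closed. Because $\delta(E)$ is Dirac, the only way to write $\delta(E) = \bigoplusiinI p_i \stardot E_i$ with $p_i > 0$ is to take every~$E_i$ equal to~$E$; hence by Definition~\ref{def-pr-transition-relation}(b) a combined transition $\delta(E) \alphaarrow \nu$ holds exactly when $\nu$ is a convex combination of $\eta_1, \ldots, \eta_m$, i.e.\ $\nu \in cc(\lc \eta_1,\ldots,\eta_m \rc) =: S$.

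For $\alpha \neq \tau$ this already settles matters, since then $\alphahidearrow$ coincides with $\alphaarrow$, so $T_{E,\alpha} = S$. For $\alpha = \tau$ I would unfold Definition~\ref{def-weak-arrows}. The crucial observation is again the rigidity of Dirac distributions: a decomposition $\delta(E) = \mu_1 \oplusr \mu_2$ with $r \in (0,1)$ forces $\mu_1 = \mu_2 = \delta(E)$, so case~(iii) contributes precisely the targets $\mu'_1 \oplusr \delta(E)$ with $\delta(E) \tauarrow \mu'_1$ (hence $\mu'_1 \in S$) and $r \in (0,1)$. Together with case~(i), the limit $r = 1$ giving $\nu \in S$, and case~(ii), the limit $r = 0$ giving $\delta(E)$ itself, this yields $T_{E,\tau} = cc(S \cup \lc \delta(E) \rc) = cc(\lc \eta_1, \ldots, \eta_m, \delta(E) \rc)$, the convex hull of the finitely many direct $\tau$-successors together with $\delta(E)$.

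In either case $T_{E,\alpha}$ is the convex hull of a finite set of points of $\DistrF \subseteq \IR^n$, with $n = |\calF|$, and such a convex hull is a compact polytope, hence closed, with respect to the Chebyshev metric, which is topologically equivalent to the Euclidean one. Since $\delta(E) \alphahidearrow \nu_i$ gives $\nu_i \in T_{E,\alpha}$ for every~$i$, and $\nu_i \to \nu$, closedness yields $\nu \in T_{E,\alpha}$, i.e.\ $\delta(E) \alphahidearrow \nu$. I expect the only genuinely delicate point to be the characterization of $T_{E,\tau}$ as a finitely generated convex hull, in particular the Dirac-rigidity step that collapses case~(iii); once the target set is recognized as a finite polytope, the limit argument requires nothing beyond the standard closedness of convex hulls of finite point sets, and does not even appeal to the sequential compactness of~$\DistrF$.
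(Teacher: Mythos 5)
Your proposal is correct and takes essentially the same route as the paper's own proof: your sets $S$ and $\cc(S \cup \lbrace \delta(E) \rbrace)$ are exactly the paper's $E \after \alpha$ and $E \after (\tau)$, and both arguments conclude by closedness of the convex hull of a finite set of distributions, without needing sequential compactness. The only difference is presentational: you spell out the Dirac-rigidity argument behind the characterization of the target sets, which the paper asserts implicitly.
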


\begin{proof}
  For $E \in \calF$ and $\alpha \in \calA$, define
  $E \after \alpha = \mathit{cc}(\lc \mu \mid  E \arrow\alpha \mu\})$,
  pronounced $E$ `after'~$\alpha$, to be the
  convex closure in~$\DistrE$ of all distributions that can be
  reached from~$E$ by an $\alpha$-transition.  Then
  $\delta(E) \arrow\alpha \nu$ iff $\nu \in E \after \alpha$.
  Recall that transitions for non-deterministic processes are not
  probabilistically combined. See
  Definition~\ref{def-pr-transition-relation}.
  Since $E \after \alpha \subseteq \DistrF$ is the convex closure of a
  finite set of distributions, it is certainly closed in the
  space~$\DistrF$.  Since it holds that $\delta(E) \alphaarrow \nu_i$
  for all $i \in \IN$, one has $\nu_i \in E \after \alpha$
  for~$i \in \IN$. Hence, $\lim_{\, i \rightarrow\infty} \nu_i = \nu$
  implies that $\nu \in E \after \alpha$, i.e.\
  $\delta(E) \alphaarrow \nu$.

  For $E \in \calF$, define
  $E \after (\tau) := \mathit{cc}(\{\mu \mid E \arrow\tau \mu\}\cup\{E\})$.
  Then $\delta(E) \tauhidearrow\nu$ iff
  $\nu \in E \after (\tau)$. The set
  $E \after (\tau) \subseteq \DistrF$ is closed, and thus
  $\nu_i \in E \after (\tau)$ implies $\nu \in E \after (\tau)$, which
  means $\delta(E) \tauhidearrow \nu$.
\end{proof}

\noindent
The above result for Dirac distributions holds for
general distributions as well.

\begin{lemma}
  \label{partial transitions closed 2}
  Let $\mu, \nu \in \Distr(\calF)$, $\alpha\in\calA$,
  $(\nu_i)_{i{=}0}^\infty \in \Distr(\calF)^\infty$, and assume
  $\lim_{\, i \rightarrow\infty} \nu_i = \nu$. \vspace{2pt} If it
  holds that $\mu \alphahidearrow \nu_i \,$ for all $i \in \IN$, then
  also $\mu \alphahidearrow \nu$.
\end{lemma}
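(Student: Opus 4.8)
The plan is to reduce the statement to the Dirac case already settled in Lemma~\ref{partial transitions closed 1}, by decomposing $\mu$ into its Dirac components and then using the sequential compactness of $\DistrF$ to pass to a limit. Write $\mu = \bigopluskinK \: p_k \stardot \delta(E_k)$, where $K$ is a finite index set enumerating $\spt(\mu) \subseteq \calF$ and each $p_k = \mu(E_k) > 0$. Since $\alphahidearrow$ coincides with $\alphaarrow$ when $\alpha \neq \tau$ and equals $\tauhidearrow$ when $\alpha = \tau$, every step below can be read through the matching clause of Lemmas~\ref{lem-composition} and~\ref{lem-decomposition}, so no genuine case distinction on $\alpha$ is needed.

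First I would decompose each hypothesised transition. Applying Lemma~\ref{lem-decomposition} to $\mu \alphahidearrow \nu_i$ along the decomposition $\mu = \bigopluskinK \: p_k \stardot \delta(E_k)$ yields, for every $i \in \IN$ and $k \in K$, a distribution $\nu_{i,k}$ with $\delta(E_k) \alphahidearrow \nu_{i,k}$ and $\nu_i = \bigopluskinK \: p_k \stardot \nu_{i,k}$. Because $\calF$ is transition closed, each $\nu_{i,k} \in \DistrF$.

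The main obstacle is that these decompositions are chosen independently for each $i$: the tuple $(\nu_{i,k})_{k \in K}$ need bear no relation to $(\nu_{i+1,k})_{k \in K}$, so one cannot directly take limits inside $\nu_i = \bigopluskinK \: p_k \stardot \nu_{i,k}$. This is exactly where the topology enters. As $K$ is finite and $\DistrF$ is sequentially compact by Theorem~\ref{thm-seq-compact}, the finite product $\DistrF^K$ is sequentially compact; iterating the extraction of a convergent subsequence $|K|$ times, I would obtain a single subsequence $(\nu_{i_m})_{m=0}^\infty$ along which $\nu_{i_m,k} \to \nu_{*,k}$ for some $\nu_{*,k} \in \DistrF$, simultaneously for all $k \in K$.

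Finally I would reassemble the limit. By Lemma~\ref{partial transitions closed 1}, the fact that $\delta(E_k) \alphahidearrow \nu_{i_m,k}$ for all $m$ together with $\nu_{i_m,k} \to \nu_{*,k}$ gives $\delta(E_k) \alphahidearrow \nu_{*,k}$. Since probabilistic composition with fixed weights is continuous in the sup-norm, $\nu_{i_m} = \bigopluskinK \: p_k \stardot \nu_{i_m,k}$ converges to $\bigopluskinK \: p_k \stardot \nu_{*,k}$; but $\nu_{i_m} \to \nu$ as well, whence $\nu = \bigopluskinK \: p_k \stardot \nu_{*,k}$ by uniqueness of limits. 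Lemma~\ref{lem-composition} applied to the transitions $\delta(E_k) \alphahidearrow \nu_{*,k}$ then delivers $\mu = \bigopluskinK \: p_k \stardot \delta(E_k) \alphahidearrow \bigopluskinK \: p_k \stardot \nu_{*,k} = \nu$, which is the claim.
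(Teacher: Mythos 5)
Your proposal is correct and follows essentially the same route as the paper's own proof: decompose $\mu$ into Dirac components via Lemma~\ref{lem-decomposition}, use iterated extraction of convergent subsequences in the sequentially compact space $\Distr(\calF)$ to obtain simultaneous limits of the per-component targets, invoke Lemma~\ref{partial transitions closed 1} on each component, and reassemble with Lemma~\ref{lem-composition} together with continuity and uniqueness of limits. No gaps to report.
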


\begin{proof}
  Suppose $\mu \alphahidearrow \nu_i$ for all $i \in I$.  Let
  $\mu = \bigoplus_{j=1}^k p_j\mycdot E_j$. \vspace{-1pt} By
  Lemma~\ref{lem-decomposition}, for all $i \in \IN$ and
  $1 \leqslant j \leqslant k$ there are $\nu_{ij}$ such that
  $\delta(E_j) \alphahidearrow \nu_{ij}$ and
  $\nu_i = \bigoplus_{j=1}^k p_j \mycdot \nu_{ij}$.  The countable
  sequence $(\nu_{i 1},\nu_{i 2},\dots,\nu_{i k})_{i=0}^\infty$ of
  $k$-dimensional vectors of probability distributions need not have a
  limit.  However, by the sequential compactness of
    $\DistrF$ this sequence has an infinite subsequence in which the
    first components $\nu_{i_1}$ converge to a limit~$\eta_1$. That
    sequence in turn has an infinite subsequence in which also the
    second components $\nu_{i_2}$ converge to a limit~$\eta_2$.  Going
    on this way, one finds a subsequence
  $(\nu_{i_h 1},\nu_{i_h 2},\dots,\nu_{i_h k})_{h=0}^\infty$ of
  $(\nu_{i 1},\nu_{i 2},\dots,\nu_{i k})_{i=0}^\infty$ for
  $i_0< i_1 < \dots$ that has a limit, say
  \(\lim_{\, h\rightarrow\infty}(\nu_{i_h 1},\nu_{i_h
    2},\dots,\nu_{i_h k})= (\eta_1,\eta_2,\dots,\eta_k)\). Using that
  $\lim_{\, h \rightarrow\infty} \nu_{i_h}=\nu$, one obtains
  $\nu = \bigoplus_{j=1}^k p_j \mycdot \eta_j$.  For each
  $j=1,\dots,k$, by Lemma~\ref{partial transitions closed 1}, since
  $\delta(E_j) \alphahidearrow \nu_{ij}$ for all $i \in I$ and
  $\lim_{\, h \rightarrow\infty} \nu_{i_hj}=\eta_j$, we conclude that
  $\delta(E_j) \alphahidearrow \eta_j$.  Thus, by Lemma~\ref{lem-composition},
  $\mu = \bigoplus_{j=1}^k p_j\mycdot E_j \alphahidearrow
  \bigoplus_{j=1}^k p_j \mycdot \eta_j =\nu$.
\end{proof}

\noindent
Next, we consider a partial transition over a convergent sequence of
distributions.

\begin{lemma}
  \label{partial transitions closed 3}
  Let
  $(\mu_i)_{i{=}0}^\infty, (\nu_i)_{i{=}0}^\infty \in
  \Distr(\calF)^\infty$ such that
  $\lim_{\, i \rightarrow\infty} \mu_i = \mu$ and
  $\lim_{\, i \rightarrow\infty} \nu_i = \nu$. \vspace{2pt} If it
  holds that $\mu_i \alphahidearrow \nu_i$ for all $i \in \IN$, then
  also $\mu \alphahidearrow \nu$.
\end{lemma}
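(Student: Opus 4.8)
The plan is to reduce to the already-established fixed-source case, Lemma~\ref{partial transitions closed 2}, by exploiting that each $\mu_i$ is a vanishingly small perturbation of the common limit~$\mu$. Concretely, I would first invoke Lemma~\ref{lem-representation-of-limit} on the convergent sequence $(\mu_i)_{i=0}^\infty$ to obtain distributions $\mu'_i \in \DistrF$ and probabilities $r_i \in [0,1]$ with $\mu_i = (1-r_i)\mu \oplus r_i \mu'_i$ and $\lim_{i\to\infty} r_i = 0$. Since $r_i \to 0$, discarding finitely many initial terms I may assume $r_i < 1$ throughout; this changes neither limit and leaves the hypotheses intact.

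Next I split off the ``$\mu$-part'' of each transition. For every $i$ with $r_i > 0$, apply Lemma~\ref{lem-decomposition} (part~(a) when $\alpha \neq \tau$, part~(b) when $\alpha = \tau$) to decompose $\mu_i \alphahidearrow \nu_i$ along the writing $\mu_i = (1-r_i)\mu \oplus r_i \mu'_i$. This yields $\eta_i, \eta'_i \in \DistrF$ with $\mu \alphahidearrow \eta_i$, $\mu'_i \alphahidearrow \eta'_i$, and $\nu_i = (1-r_i)\eta_i \oplus r_i \eta'_i$; for the remaining $i$ with $r_i = 0$ I just set $\eta_i := \nu_i$, so that $\mu = \mu_i \alphahidearrow \eta_i$ still holds. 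Transition-closedness of $\calF$ guarantees $\eta_i \in \DistrF$. Thus I obtain a single sequence with $\mu \alphahidearrow \eta_i$ for all $i$, all sharing the \emph{fixed} source~$\mu$.

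The crux is then to show $\lim_{i\to\infty}\eta_i = \nu$. From $\nu_i = (1-r_i)\eta_i \oplus r_i\eta'_i$ one computes $\nu_i(x) - \eta_i(x) = r_i(\eta'_i(x) - \eta_i(x))$ for every $x$, whence $d(\nu_i,\eta_i) \le r_i$. Combined with $\lim_i r_i = 0$ and $\lim_i \nu_i = \nu$, the triangle inequality gives $d(\eta_i,\nu) \le d(\eta_i,\nu_i) + d(\nu_i,\nu) \to 0$, i.e.\ $\eta_i \to \nu$. With $\mu \alphahidearrow \eta_i$ for all $i$ and $\eta_i \to \nu$ in hand, Lemma~\ref{partial transitions closed 2} applies directly and delivers $\mu \alphahidearrow \nu$, as required.

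The main obstacle is the varying source $\mu_i$: Lemma~\ref{partial transitions closed 2} only handles a constant source, so the essential move is the representation of Lemma~\ref{lem-representation-of-limit}, which lets me peel the shrinking perturbation $r_i\mu'_i$ off each $\mu_i$ and absorb the resulting discrepancy between $\nu_i$ and the genuine $\mu$-derivative $\eta_i$ into a term bounded by $r_i \to 0$. Note that sequential compactness is not needed here directly --- it is used only inside Lemma~\ref{partial transitions closed 2} --- because the bound $d(\nu_i,\eta_i)\le r_i$ already forces $\eta_i$ to converge to $\nu$ without passing to a subsequence.
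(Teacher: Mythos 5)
Your proof is correct and takes essentially the same route as the paper's: both invoke Lemma~\ref{lem-representation-of-limit} to write $\mu_i = (1-r_i)\mu \oplus r_i\mu'_i$ with $r_i \to 0$, use Lemma~\ref{lem-decomposition} to split each transition $\mu_i \alphahidearrow \nu_i$ into a part sourced at the fixed $\mu$ and a vanishing remainder, observe that the $\mu$-derivatives converge to $\nu$, and finish with Lemma~\ref{partial transitions closed 2}. Your explicit bound $d(\nu_i,\eta_i) \leqslant r_i$ and the separate treatment of the cases $r_i = 0$ and $r_i = 1$ merely make precise details the paper's proof leaves implicit.
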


\begin{proof}
  Since $\lim_{\, i \rightarrow \infty} \mkern1mu \mu_i= \mu$, we can
  write
  $\mu_i= (1-r_i) \mu \mathrel\oplus r_i \mkern1mu \mu''_i$, for
  suitable $\mu''_i \in \DistrF$ and $r_i \geqslant 0$\linebreak[4] such that
  $\lim_{\, i \rightarrow \infty} \mkern1mu r_i = 0$, as guaranteed by
  Lemma~\ref{lem-representation-of-limit}.
  Because $\mu_i \alphahidearrow \nu_i$, by
  Lemma~\ref{lem-decomposition} there are
  distributions $\nu'_i, \nu''_i \in \DistrF$ for~$i \in \IN$ such
  that $\nu_i = (1-r_i) \nu'_i \mathrel\oplus r_i \nu''_i$,
  $\mu \alphahidearrow \nu'_i$, and $\mu''_i \alphahidearrow \nu''_i$.
  \vspace{2pt}
  We have $\lim_{\, i \rightarrow \infty} \nu'_i = \nu$ as well, since
  \vspace{2pt} $\lim_{\, i \rightarrow \infty} r_i = 0$.
  Thus, $\lim_{\, i \rightarrow \infty} \nu'_i = \nu$ and $\mu
  \alphahidearrow \nu'_i$ for~$i \in \IN$.
  Therefore, it follows by Lemma~\ref{partial transitions closed 2}
  that $\mu \alphahidearrow \nu$.
\end{proof}

\noindent
For $\mu, \nu \in \DistrF$, we write $\mu \Rightarrow_n \nu$ if there
are $\eta_0, \eta_1, \dots, \eta_n \in \DistrF$ such that
$\mu = \eta_0 \tauhidearrow \eta_1 \tauhidearrow \dots \tauhidearrow
\eta_n = \nu$.  Clearly, it holds that $\mu \Rightarrow_n \nu$ for
some $n \in \IN$ in case $\mu \Rightarrow \nu$, because $\Rightarrow$
is the transitive closure of $\tauhidearrow$.

We have the following pendant of Lemma~\ref{partial transitions closed
  3} for~$\Rightarrow_n$.

\begin{lemma}
  \label{bounded weak transitions closed}
  Let
  $(\mu_i)_{i{=}0}^\infty, (\nu_i)_{i{=}0}^\infty \in
  \Distr(\calF)^\infty$, $\lim_{\, i \rightarrow\infty} \mu_i = \mu$
  and $\lim_{\, i \rightarrow\infty} \nu_i = \nu$.  If
  $\mu_i \Rightarrow_n \nu_i$ for all $i \in \IN$ then
  $\mu \Rightarrow_n \nu$.
\end{lemma}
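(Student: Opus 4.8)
The plan is to proceed by induction on the number $n$ of partial $\tau$-transitions, using Lemma~\ref{partial transitions closed 3} for the single-step part and the sequential compactness of $\DistrF$ to supply convergent intermediate distributions. For the base case $n = 0$, the hypothesis $\mu_i \Rightarrow_0 \nu_i$ means $\mu_i = \nu_i$ for all $i \in \IN$, so the two sequences coincide and therefore share the same limit; by uniqueness of limits $\mu = \nu$, i.e.\ $\mu \Rightarrow_0 \nu$.

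For the inductive step I would assume the claim for $n$ and suppose $\mu_i \Rightarrow_{n+1} \nu_i$ for all $i$. Splitting each chain at its first step, I choose $\eta_i \in \DistrF$ with $\mu_i \tauhidearrow \eta_i$ and $\eta_i \Rightarrow_n \nu_i$. The sequence $(\eta_i)_{i=0}^\infty$ need not converge, so the key move is to invoke Theorem~\ref{thm-seq-compact} to pass to a subsequence $(\eta_{i_k})_{k=0}^\infty$ with a limit $\eta \in \DistrF$. Along this subsequence $(\mu_{i_k})$ and $(\nu_{i_k})$ still converge to $\mu$ and $\nu$, being subsequences of convergent sequences. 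Since $\mu_{i_k} \tauhidearrow \eta_{i_k}$ with $\lim_k \mu_{i_k} = \mu$ and $\lim_k \eta_{i_k} = \eta$, Lemma~\ref{partial transitions closed 3} (taking $\alpha = \tau$) yields $\mu \tauhidearrow \eta$. Since $\eta_{i_k} \Rightarrow_n \nu_{i_k}$ with $\lim_k \eta_{i_k} = \eta$ and $\lim_k \nu_{i_k} = \nu$, the induction hypothesis applied to these two extracted sequences gives $\eta \Rightarrow_n \nu$. Concatenating, $\mu \tauhidearrow \eta \Rightarrow_n \nu$, that is, $\mu \Rightarrow_{n+1} \nu$.

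The main obstacle is exactly that the intermediate distributions $\eta_i$ carry no \emph{a priori} convergence, so a naive step-by-step limit argument cannot get off the ground; sequential compactness of $\DistrF$ is what rescues it. A subtle point to respect is that the induction hypothesis must be stated for arbitrary convergent sequences, so that it may legitimately be applied to the extracted subsequences $(\eta_{i_k})$ and $(\nu_{i_k})$ rather than to the original sequences $(\eta_i)$ and $(\nu_i)$.
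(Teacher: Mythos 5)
Your proof is correct and follows essentially the same route as the paper's: induction on $n$, splitting each chain at its first step as $\mu_i \tauhidearrow \eta_i \Rightarrow_n \nu_i$, extracting a convergent subsequence of the intermediate distributions $(\eta_i)$ by sequential compactness of $\Distr(\calF)$, then applying Lemma~\ref{partial transitions closed 3} to the first step and the induction hypothesis to the tail. Your closing remark about stating the induction hypothesis for arbitrary convergent sequences (so it applies to the extracted subsequences) is exactly the point the paper uses implicitly.
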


\begin{proof}
  By induction on $n$. Basis, $n=0$: Trivial.
  Induction step, $n{+}1$: Given
  $(\mu_i)_{i{=}0}^\infty, (\nu_i)_{i{=}0}^\infty \in
  \Distr(\calF)^\infty\!$, $\mu = \lim_{\, i \rightarrow\infty} \mu_i$,
  and $\nu = \lim_{\, i \rightarrow\infty} \nu_i$, suppose
  $\mu_i \Rightarrow_{n+1} \nu_i$ for all $i \in \IN$.
  \vspace{2pt}
  Let $(\eta_i)_{i{=}0}^\infty \in \Distr(\calF)^\infty$ be such that
  $\mu_i \tauhidearrow \eta_i \Rightarrow_n \nu_i$ for all $i \in
  \IN$.
  Since $\DistrF$ is \vspace{2pt} sequentially compact, the
  sequence~$(\eta_i)_{i{=}0}^\infty$ has a \vspace{2pt} convergent
  subsequence $(\eta_{i_k})_{k{=}0}^\infty$; put
  $\eta = \lim_{\, k \rightarrow\infty}\eta_{i_k}$.
  Because $\mu_{i_k} \tauhidearrow \eta_{i_k}$ for all $k \in \IN$,
  one has $\mu \tauhidearrow\eta$ by Lemma~\ref{partial transitions
    closed 3}.
  Since $\eta_{i_k} \Rightarrow_n \nu_{i_k}$ for $k \in \IN$, the
  induction hypothesis yields $\eta \Rightarrow_n \nu$.
  It follows that $\mu \Rightarrow_{n+1}\nu$.
\end{proof}

\noindent
We adapt Lemma~\ref{bounded weak transitions closed} to obtain a
continuity result for weak transitions~$\,\Arrow$.
  
\begin{lemma}
  \label{weak transitions closed}
  Let
  $(\mu_i)_{i{=}0}^\infty, (\nu_i)_{i{=}0}^\infty \in
  \Distr(\calF)^\infty$, $\lim_{\, i \rightarrow\infty} \mu_i = \mu$
  and $\lim_{\, i \rightarrow\infty} \nu_i = \nu$.  If
  $\mu_i \Rightarrow \nu_i$ for all $i \in \IN$, then
  $\mu \Rightarrow \nu$.
\end{lemma}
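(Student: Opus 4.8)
The plan is to argue by induction on $K = \max\{c(E)\mid E\in\calF\}$, the maximal complexity of a process occurring in the finite transition-closed set~$\calF$. This is a legitimate induction parameter because every transition $E\arrow\alpha\bigoplus_{i} p_i\mycdot F_i$ strictly decreases complexity, so the set $\calF'=\{E\in\calF\mid c(E)<K\}$ is again transition closed and has strictly smaller parameter; the induction hypothesis is the full statement of the lemma for~$\Distr(\calF')$. The base case $K=0$ forces $\calF=\{\bfzero\}$, where every weak transition is trivial and the claim is immediate. For the step, I would first apply Lemma~\ref{lem-decomposition}(c) to split each given transition along the support of its source: writing $\mu_i=\bigoplus_{E\in\calF}\mu_i(E)\mycdot\delta(E)$ one obtains $\nu_i^E\in\DistrF$ with $\delta(E)\Arrow\nu_i^E$ and $\nu_i=\bigoplus_{E\in\calF}\mu_i(E)\mycdot\nu_i^E$. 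As $\calF$ is finite, sequential compactness of~$\DistrF$ (Theorem~\ref{thm-seq-compact}) lets me pass to a single subsequence along which every one of the finitely many sequences $(\nu_i^E)_{i}$ converges, say to~$\nu^E$; by continuity of finite convex combinations $\nu=\bigoplus_{E\in\calF}\mu(E)\mycdot\nu^E$, so it suffices to prove $\delta(E)\Arrow\nu^E$ for each~$E$.

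It remains to derive $\delta(E)\Arrow\nu^E$ from $\delta(E)\Arrow\nu_i^E$ together with $\nu_i^E\to\nu^E$. When $c(E)<K$ the whole transition stays inside the transition-closed set~$\calF'$, so the induction hypothesis applies at once, taking the constant sequence~$\delta(E)$ as source. When $c(E)=K$, I would first \emph{consolidate} the transition out of the single top-level process~$E$: all mass that ever leaves~$E$ does so through $E$'s finitely many $\tau$-transitions, whose combined targets constitute the compact convex set $Z_E=\cc(\{\eta\mid E\arrow\tau\eta\})\subseteq\Distr(\calF')$. Using convexity of~$Z_E$ and the composition Lemma~\ref{lem-composition}(c), any $\delta(E)\Arrow\rho$ can be reorganised as $\delta(E)\tauhidearrow(1{-}t)\mycdot\delta(E)\oplus t\mycdot\zeta\Arrow(1{-}t)\mycdot\delta(E)\oplus t\mycdot\sigma=\rho$ with a single combined transition $\delta(E)\arrow\tau\zeta$, $\zeta\in Z_E$, and $\sigma\in\Distr(\calF')$ satisfying $\zeta\Arrow\sigma$. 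Applying this to each~$\nu_i^E$, extracting convergent subsequences of the finitely constrained data $t_i,\zeta_i,\sigma_i$, using closedness of~$Z_E$ to keep $\delta(E)\arrow\tau\zeta$ in the limit, and invoking the induction hypothesis on $\zeta_i\Arrow\sigma_i$ inside~$\Distr(\calF')$ to get $\zeta\Arrow\sigma$, I recover $\delta(E)\Arrow\nu^E$. Recombining the components by Lemma~\ref{lem-composition}(c) finally yields $\mu\Arrow\nu$.

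The step I expect to be the main obstacle is exactly this consolidation, and behind it lies the reason a naive induction on transition length fails: the number of $\tauhidearrow$-steps realising $\mu_i\Arrow\nu_i$ need not be bounded in~$i$, since a weak transition may move arbitrarily small slivers of mass and so be arbitrarily long, ruling out any attempt to pad to a common length and quote Lemma~\ref{bounded weak transitions closed} directly. What tames this is that complexity strictly decreases along transitions, so mass flows only \emph{downward} through the finitely many complexity levels; this makes $K$ well founded and confines the genuinely unbounded behaviour to the strictly smaller subsystem covered by the induction hypothesis, while sequential compactness furnishes the limits of the finitely many component sequences. An alternative packaging of the same idea is to prove outright that $\Arrow$ coincides with $\Rightarrow_K$ on~$\DistrF$ — the consolidation being precisely what bounds the length by~$K$ — after which Lemma~\ref{bounded weak transitions closed} concludes immediately.
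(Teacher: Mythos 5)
Your proposal is correct, but it is organized differently from the paper's proof and does strictly more work. The paper's entire proof is three sentences: since $\calF$ is finite and each of its processes has only finitely many $\tau$-transitions, a global bound $N$ exists such that $\mu \Rightarrow \nu$ implies $\mu \Rightarrow_k \nu$ for some $k \leqslant N$; since $\tauhidearrow$ is reflexive, shorter sequences can be padded with identity steps, so $\Rightarrow$ and $\Rightarrow_N$ coincide on $\Distr(\calF)$; hence the statement follows from Lemma~\ref{bounded weak transitions closed}. Your ``alternative packaging'' at the end is exactly this proof, with the bound made explicit as the maximal complexity $K$; and your consolidation argument is in effect a rigorous justification of the uniform-bound claim, which the paper asserts in a single sentence. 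Your observation that a weak transition may move arbitrarily small slivers of mass and hence use unboundedly many steps is precisely why that claim needs an argument, so your proposal actually fills in the thinnest point of the paper's own reasoning. Your main line, by contrast, re-runs the compactness extraction inside an induction on $K$ instead of just citing Lemma~\ref{bounded weak transitions closed}; that is sound but redundant given that lemma is already available. Two small repairs are needed. First, the base case $K=0$ does not force $\calF = \{\bfzero\}$, since processes such as $\bfzero + \bfzero$ also have complexity $0$; but such processes admit no transitions, so weak transitions from distributions over them are still trivial. Second, the consolidation claim is not a one-liner: it needs its own induction on the number of $\tauhidearrow$-steps in $\delta(E) \Arrow{} \rho$, at each step using Lemma~\ref{lem-decomposition}(b) to separate the mass still residing at $E$ from the departed mass, convexity of the set of combined $\tau$-targets of $E$ to merge the exit distributions, and Lemma~\ref{lem-composition}(c) to merge the downstream evolutions. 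You name exactly these ingredients, but since this claim is the crux of your argument, the induction should be stated rather than implied.
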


\begin{proof}
  Since $\calF$ contains only finitely many non-deterministic
  processes, which can do finitely many $\tau$-transitions only,
  a global upperbound~$N$ exists such that if $\mu \Rightarrow \nu$
  then $\mu \Rightarrow_k \nu$ for some $k \leqslant N$.

  Moreover, as each sequence
  $\mu = \eta_0 \tauhidearrow \eta_1 \tauhidearrow \dots \tauhidearrow
  \eta_k = \nu$ with $k < N$ can be extended to a sequence
  $\mu = \eta_0 \tauhidearrow \eta_1 \tauhidearrow \dots \tauhidearrow
  \eta_{N} = \nu$, namely by taking $\eta_i = \nu$ for all
  $k < i \leqslant N$, on~$\calF$ the relations $\Rightarrow$
  and~$\Rightarrow_N$ coincide.  Consequently, Lemma~\ref{weak
    transitions closed} follows from Lemma~\ref{bounded weak
    transitions closed}.
\end{proof}

\noindent
The following theorem says that equivalence classes of branching
probabilistic bisimilarity in $\DistrF$ are closed sets of
distributions.

\begin{theorem}
  \label{bisimilarity compact}
  Let $\muhat, \nuhat \in \Distr(\calF)$ and
  $(\nu_i)_{i{=}0}^\infty \in \DistrF^\infty$ such that
  $\muhat \brbisim \nu_i$ for all $i \in \IN$ and
  $\nuhat = \lim_{\, i \rightarrow\infty} \nu_i$. Then it holds
  that~$\muhat \brbisim \nuhat$.
\end{theorem}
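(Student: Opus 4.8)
The plan is to exhibit a branching probabilistic bisimulation relation $\calR$ containing the pair $\langle\muhat,\nuhat\rangle$; since $\brbisim$ is the largest such relation, $\calR\subseteq{\brbisim}$ will give $\muhat\brbisim\nuhat$. I take
\[
  \calR = {\brbisim}\cup\lc\langle\mu,\nu\rangle,\langle\nu,\mu\rangle \mid
  \exists\,(\nu_i)_{i=0}^\infty\text{ in }\DistrF\colon
  \lim_{i\rightarrow\infty}\nu_i=\nu\ \text{and}\ \mu\brbisim\nu_i\ \text{for all}\ i\rc,
\]
adjoining to $\brbisim$ every pair linking a distribution to a limit of distributions all bisimilar to it, together with its converse. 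By construction $\calR$ is symmetric, and $\langle\muhat,\nuhat\rangle\in\calR$ via the constant first component $\muhat$ and the given sequence $(\nu_i)$. Pairs already in $\brbisim$ satisfy the transfer property and weak decomposability of Definition~\ref{def-probabilistic-branching-bisimilar}; it remains to verify these for the adjoined pairs.

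For a \emph{forward} pair $\langle\mu,\nu\rangle$ with $\mu\brbisim\nu_i\rightarrow\nu$ the argument is direct. Given $\mu\arrow\alpha\mu'$, I apply the transfer property of $\brbisim$ to each $\mu\brbisim\nu_i$, obtaining $\bar\nu_i,\nu_i'$ with $\nu_i\Arrow{}\bar\nu_i\alphahidearrow\nu_i'$, $\mu\brbisim\bar\nu_i$, and $\mu'\brbisim\nu_i'$. By sequential compactness of $\DistrF$ I pass to one subsequence along which both $\bar\nu_i\rightarrow\bar\nu$ and $\nu_i'\rightarrow\nu'$; then Lemma~\ref{weak transitions closed} yields $\nu\Arrow{}\bar\nu$, Lemma~\ref{partial transitions closed 3} yields $\bar\nu\alphahidearrow\nu'$, while $\mu\brbisim\bar\nu_i\rightarrow\bar\nu$ and $\mu'\brbisim\nu_i'\rightarrow\nu'$ place $\langle\mu,\bar\nu\rangle$ and $\langle\mu',\nu'\rangle$ back in $\calR$ as forward pairs. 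The splitting $\mu=\bigopluskinK p_k\stardot\mu_k$ is handled identically, using weak decomposability of $\brbisim$ at each stage.

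The \emph{backward} pairs $\langle\nu,\mu\rangle$ are the main obstacle, because one must now match a transition of the \emph{limit} $\nu$, which need not be induced by any transition of the approximants $\nu_i$; indeed for a visible action the $\nu_i$ may admit no $\alpha$-transition at all. The key is Lemma~\ref{lem-representation-of-limit}, which writes $\nu_i=(1-r_i)\mkern1mu\nu\oplus r_i\mkern1mu\nu_i''$ with $r_i\rightarrow0$, exhibiting $\nu$ as a sub-distribution of $\nu_i$ of weight tending to $1$. Applying weak decomposability of $\brbisim$ to this splitting of $\nu_i\brbisim\mu$ gives $\mu\Arrow{}\bar\mu_i=(1-r_i)\sigma_i\oplus r_i\tau_i$ with $\nu\brbisim\sigma_i$ and $\nu_i''\brbisim\tau_i$ (for large $i$, where $r_i<1$; the degenerate cases are trivial). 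I then let the genuine partner $\sigma_i$ simulate $\nu\arrow\alpha\nu'$ by the transfer property of $\brbisim$, producing \emph{clean} transitions $\sigma_i\Arrow{}\bar\sigma_i\alphahidearrow\sigma_i'$ with $\nu\brbisim\bar\sigma_i$ and $\nu'\brbisim\sigma_i'$. Composing $\mu\Arrow{}\bar\mu_i$ with $\sigma_i\Arrow{}\bar\sigma_i$ through Lemma~\ref{lem-composition} gives $\mu\Arrow{}\rho_i$ where $\rho_i=(1-r_i)\bar\sigma_i\oplus r_i\tau_i$.

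Finally I take limits along a convergent subsequence, say $\bar\sigma_i\rightarrow\xi$ and $\sigma_i'\rightarrow\xi'$. Since $r_i\rightarrow0$, the junk term $r_i\tau_i$ vanishes, so $\rho_i\rightarrow\xi$ as well; Lemma~\ref{weak transitions closed} gives $\mu\Arrow{}\xi$, and—crucially applying Lemma~\ref{partial transitions closed 3} to the clean transitions $\bar\sigma_i\alphahidearrow\sigma_i'$ rather than to $\rho_i$—I obtain $\xi\alphahidearrow\xi'$ for every $\alpha$, circumventing the difficulty that $\rho_i$ itself need not perform an $\alpha$-transition. The surviving bisimilarities $\nu\brbisim\bar\sigma_i\rightarrow\xi$ and $\nu'\brbisim\sigma_i'\rightarrow\xi'$ show $\langle\nu,\xi\rangle,\langle\nu',\xi'\rangle\in\calR$, discharging the transfer condition with $\bar\mu:=\xi$ and $\mu':=\xi'$; the splitting of $\nu$ is entirely analogous, with weak decomposability of $\brbisim$ in place of its transfer property. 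Thus $\calR$ is a branching probabilistic bisimulation, and by maximality of $\brbisim$ we conclude $\muhat\brbisim\nuhat$.
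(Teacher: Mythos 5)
Your proof is correct and follows essentially the same route as the paper's: a candidate relation built from limits of $\brbisim$-related sequences, shown to be a branching probabilistic bisimulation using Lemma~\ref{lem-representation-of-limit}, sequential compactness of $\DistrF$, and the continuity Lemmas~\ref{partial transitions closed 3} and~\ref{weak transitions closed}. The only difference is packaging: the paper lets \emph{both} components of its relation be limits of bisimilar sequences, so the relation is symmetric by construction and a single verification covers everything, whereas your one-sided relation requires two verifications --- an easy forward case, and a backward case that reproduces the paper's core argument (representation lemma plus weak decomposability of $\brbisim$ to locate a partner for the limit, then transfer, compactness, and the continuity lemmas).
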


\begin{proof}
  Define the relation $\calR$ on~$\DistrF$ by
  \begin{displaymath}
    \mu \mopcalR \nu \iff
    \begin{array}[t]{l}
      \exists (\mu_i)_{i{=}0}^\infty, (\nu_i)_{i{=}0}^\infty
      \in \Distr(\calF)^\infty \colon \smallskip \\ \qquad
      \lim_{\, i \rightarrow\infty} \mu_i = \mu \land
      \lim_{\, i \rightarrow\infty} \nu_i = \nu \land
      \forall i \mathbin\in \IN \colon \mu_i\brbisim \nu_i
    \end{array}
  \end{displaymath}
  As $\muhat \mopcalR \nuhat$ (taking $\mu_i:=\muhat$ for all $i \in I$),
  it suffices to show that $\calR$~is a branching probabilistic bisimulation.

  Suppose $\mu \mopcalR \nu$. Let
  $(\mu_i)_{i{=}0}^\infty,(\nu_i)_{i{=}0}^\infty \mathbin\in
  \Distr(\calF)^\infty$ be such that
  $\lim_{\, i \rightarrow\infty} \mu_i\mathbin = \mu$,
  $\lim_{\, i \rightarrow\infty} \nu_i\mathbin = \nu$, and
  $\mu_i\brbisim \nu_i$ for all $i \mathbin\in \IN$.
  Since $\lim_{\, i \rightarrow\infty} \mu_i\mathbin = \mu$, there exist
  $(\mu'_i)_{i{=}0}^\infty \mathbin \in \Distr(\calF)^\infty$ and
  $(r_i)_{i{=}0}^\infty\in\IR^\infty$ such that
  $\mu_i = (1-r_i) \mkern1mu \mu \mathbin\oplus r_i \mkern1mu \mu'_i$
  for all $i \in \IN$ and
  $\lim_{\, i \rightarrow\infty} \mkern1mu r_i = 0$.

  (i)~Towards weak decomposability of~$\calR$ for $\mu$ vs.~$\nu$,
  suppose $\mu = \bigoplusjinJ \: q_j \stardot \mubar_j$.  So, for all
  $i \mathbin\in \IN$, we have that
  $\mu_i = (1 - r_i) \mkern1mu \bigl( \bigoplusjinJ \: q_j \stardot
  \mubar_j \bigr) \mathrel\oplus r_i \mkern1mu \mu'_i$. By weak decomposability
  of~$\brbisim$, there exist $\bar{\nubar}_i$, $\nu'_i$ and $\nu_{ij}$
  for~$i \mathbin\in \IN$ and~$j \mathbin\in J$ such that
  $\nu_i \Arrow{} \bar{\nubar}_i$, $\mu_i \brbisim \bar{\nubar}_i$,
  $\bar{\nubar}_i = (1-r_i) \bigl( \bigoplusjinJ \: q_j \stardot \nu_{ij}
  \bigr) \mathrel\oplus r_i \mkern1mu \nu'_i$,
  $\mu'_i \brbisim \nu'_i$, and $\mubar_j \brbisim \nu_{ij}$
  for~$j \mathbin\in J$.

  The sequences $( \nu_{ij} )_{i{=}0}^\infty$ for~$j \in J$ may not
  converge. However, by sequential compactness of~$\DistrF$ (and
  successive sifting out for each~$j \in J$) an index sequence
  $( i_k)_{k{=}0}^\infty$ exists such that the sequences
  $( \nu_{i_kj} )_{k{=}0}^\infty$ converge, say
  $\lim_{\, k \rightarrow \infty} \nu_{i_kj} = \nubar_j$
  for~$j \in J$. Put
  $\nubar = \bigoplusjinJ \: q_j \stardot \nubar_{j}$.
  Then it holds that
  \begin{displaymath}
    \lim_{\, k \rightarrow \infty} \bar\nubar_{i_k} =
    \lim_{\, k \rightarrow \infty}
    (1-r_{i_k}) \bigl( \bigoplusjinJ \: q_j \stardot \nu_{i_kj} \bigr)
    \mathrel\oplus r_{i_k} \mkern1mu \nu'_{i_k} =
    \lim_{\, k \rightarrow \infty}
    \bigoplusjinJ \: q_j \stardot \nu_{i_kj} =
    \bigoplusjinJ \: q_j \stardot \nubar_{j} =
    \nubar
  \end{displaymath}
  as $\lim_{\, k \rightarrow \infty} r_{i_k} = 0$ and probabilistic
  composition is continuous.
  Since $\nu_{i_k} \Arrow{} \bar\nubar_{i_k}$ for all $k \in \IN$, one has
  $\lim_{\, k \rightarrow \infty} \nu_{i_k} \Arrow{} \lim_{\, k
    \rightarrow \infty} \bar\nubar_{i_k}$, i.e.\ $\nu \Arrow{} \nubar$, by
  Lemma~\ref{weak transitions closed}. Also,
  $\mu_{i_k} \brbisim \bar\nubar_{i_k}$ for all $k \in \IN$. Therefore, by
  definition of~$\calR$, we obtain $\mu \mopcalR \nubar$. Since
  $\mubar_j \brbisim \nu_{i_k j}$ for all $k \in \IN$ and $j \in J$,
  it follows that $\mubar_j \mopcalR \nubar_j$ for~$j \in J$.
  Thus, $\nu \Arrow{}
  \nubar = \bigoplusjinJ \: q_j \stardot \nubar_{j}$, $\mu \mopcalR
  \nubar$, and $\mubar_j \mopcalR \nubar_j$ for all~$j \in J$, as was
  to be shown.
  Hence the relation $\calR$ is weakly decomposable.

  (ii)~For the transfer property, suppose $\mu \arrow{\alpha} \mu'$
  for some $\alpha\in\calA$. Since, for
  each~$i \mathbin\in \IN$, $\mu_i \brbisim \nu_i$ and
  $\mu_i = (1-r_i) \mkern1mu \mu \mathbin\oplus r_i \mkern1mu \mu'_i$, it follows from weak
  decomposability of~$\brbisim$ that
  distributions $\nubar_i$, $\nu'_i$ and~$\nu''_{i}$ exist such that
  $\nu_i \Arrow{} \nubar_i$, $\mu_i \brbisim \nubar_i$,
  $\nubar_i = (1-r_i) \mkern1mu \nu'_i \mathrel\oplus r_i \mkern1mu
  \nu''_{i}$ and $\mu \brbisim \nu'_{i}$.
  By the transfer property for~$\brbisim$, for each $i \in \IN$ exist
  $\etabar_i, \eta'_i \in \DistrE$ such that
    \begin{displaymath}
      \nu'_{i} \Arrow{} \etabar_i, \ 
      \etabar_i \alphahidearrow \eta_i', \ 
      \mu \brbisim \etabar_i, \ \text{and} \ 
      \mu' \brbisim \eta_i'.
    \end{displaymath}
    We have $\nubar'_i \in \DistrF$ for~$i \in \IN$. Also,
    $\etabar_i, \eta'_i \in \DistrF$ for~$i \in \IN$, since $\calF$ is
    assumed to be transition closed. Therefore, by sequential
    compactness of~$\DistrF$, the sequences
    $(\nubar'_{i})_{i{=}0}^\infty$,
    $(\etabar_{i})_{i{=}0}^\infty$,
    $(\etabar'_{i})_{i{=}0}^\infty$ have converging subsequences
    $(\nubar'_{i_k})_{k{=}0}^\infty$,
    $(\etabar_{i_k})_{k{=}0}^\infty$, and
    $(\etabar'_{i_k})_{k{=}0}^\infty$, respectively. Put
    $\nubar = \lim_{\, k \rightarrow \infty} \nu'_{i_k}$,
    $\etabar = \lim_{\, k \rightarrow \infty} \etabar_{i_k}$, and
    $\eta' = \lim_{\, k \rightarrow \infty} \eta'_{i_k}$.  As
    $\lim_{\, k \rightarrow \infty} r_{i_k} = 0$, one has
    $\lim_{\, k \rightarrow \infty} \nubar_{i_k} = \lim_{\, k \rightarrow
      \infty}\nu'_{i_k} = \nubar$.

    Since $\nu_{i_k} \Arrow{} \nubar_{i_k}$ for~$k \in \IN$, we obtain
    $\lim_{\, k \rightarrow \infty} \nu_{i_k} \Arrow{} \lim_{\, k
      \rightarrow \infty} \nubar_{i_k}$ by Lemma~\ref{weak transitions
      closed}, thus $\nu \Arrow{} \nubar$. Likewise, as
    $\nu'_{i_k} \Arrow{} \etabar_{i_k}$ for all $k \in \IN$,
    one has $\nubar \Arrow{} \etabar$, and therefore
    $\nu \Arrow{} \etabar$.  Furthermore, because
    $\etabar_{i_k} \alphahidearrow \eta'_{i_k}$ for~$k \in \IN$, it
    follows that $\etabar\alphahidearrow\eta'$, now by
    Lemma~\ref{partial transitions closed 3}. From
    $\mu \brbisim \etabar_{i_k}$ for all $k \in \IN$, we obtain
    $\mu \mopcalR \etabar$ by definition of~$\calR$. Finally, 
    $\mu' \brbisim \eta'_{i_k}$ for all~$k \in \IN$ yields
    $\mu' \mopcalR \eta'$. Thus $\nu \Arrow{} \etabar \alphahidearrow
    \eta'$, $\mu \mopcalR \etabar$, and~$\mu' \mopcalR \etabar'$,
    which was to be shown.
\end{proof}

\noindent
The following corollary of Theorem~\ref{bisimilarity compact} will be
used in the next section.

\begin{corollary}
  \label{cor-bisimilar derivatives compact}
  For each $\mu \in \DistrE$, the set
  $T_\mu = \lc \nu \mathbin\in \DistrE \mid \nu \brbisim \mu \land \mu
  \mathbin{\Rightarrow} \nu \rc$ is a sequentially compact set.
\end{corollary}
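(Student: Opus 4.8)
The plan is to reduce the statement to the continuity machinery of this section by working inside a single finite state space. First I would use the finitary nature of the process algebra to fix a finite transition-closed set $\calF \subseteq \calE$ with $\mu \in \DistrF$, namely the set generated from $\spt(\mu)$ by closing under transitions. Because $\calF$ is transition closed, every weak descendant of $\mu$ again lies in $\DistrF$; in particular $T_\mu \subseteq \DistrF$. By Theorem~\ref{thm-seq-compact}, $\DistrF$ is a sequentially compact subspace of~$\DistrE$, and it is exactly this containment that allows us to extract convergent subsequences while staying within a space to which the earlier lemmas apply.

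Next, given an arbitrary sequence $(\nu_i)_{i=0}^\infty$ in $T_\mu$, I would appeal to sequential compactness of $\DistrF$ to obtain a subsequence $(\nu_{i_k})_{k=0}^\infty$ converging to some $\nu \in \DistrF$. It then remains only to check that the limit $\nu$ again belongs to $T_\mu$, i.e.\ that both defining conditions $\nu \brbisim \mu$ and $\mu \Rightarrow \nu$ are preserved under taking limits.

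For the bisimilarity condition, each subsequence element satisfies $\mu \brbisim \nu_{i_k}$, so Theorem~\ref{bisimilarity compact}, applied with $\muhat := \mu$, the sequence $(\nu_{i_k})_{k=0}^\infty$, and $\nuhat := \nu$, yields $\mu \brbisim \nu$ directly (symmetry of $\brbisim$ gives $\nu \brbisim \mu$). For the reachability condition, I would apply Lemma~\ref{weak transitions closed} to the constant sequence $\mu_{i_k} := \mu$, which converges to $\mu$, paired with $\nu_{i_k} \to \nu$: since $\mu \Rightarrow \nu_{i_k}$ holds for every $k$, the lemma delivers $\mu \Rightarrow \nu$. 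Combining the two facts gives $\nu \in T_\mu$, and as the initial sequence was arbitrary, this establishes sequential compactness of $T_\mu$.

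The main obstacle is not in this corollary itself but is already absorbed into Lemma~\ref{weak transitions closed}: preservation of a weak transition under limits is delicate because the length of a weak transition is a priori unbounded, and that lemma resolves it via the bounded refinement $\Rightarrow_N$ together with the continuity results of Lemmas~\ref{partial transitions closed 3} and~\ref{bounded weak transitions closed}. Once those are in hand, the only genuine care required here is to keep every distribution, including the limit~$\nu$, inside the one finite space $\DistrF$ so that both the compactness and the continuity statements remain applicable; this is exactly what transition closure of $\calF$ guarantees.
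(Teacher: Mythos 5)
Your proposal is correct and follows essentially the same route as the paper: fix a finite transition-closed $\calF$ containing $\spt(\mu)$, then combine Theorem~\ref{bisimilarity compact} with Lemma~\ref{weak transitions closed} (applied to the constant sequence $\mu_i := \mu$). The only difference is presentational: the paper phrases the conclusion as ``the intersection of two closed subsets of the sequentially compact space $\DistrF$ is closed, hence sequentially compact,'' whereas you unfold that packaging by extracting a convergent subsequence from an arbitrary sequence in $T_\mu$ and verifying its limit lies in $T_\mu$ --- the same argument.
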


\begin{proof}
  For $\mu = \bigoplusiinI \: p_i \stardot E_i$, the set of processes
  $\calF = \lc E \in \calE \mid \text{$E$ occurs in~$E_i$ for
    some~$i \in I$} \rc$ is finite and closed under
  transitions. Clearly, $\mu \in \DistrF$. Moreover, $\DistrF$ is a
  sequentially compact subset of~$\DistrE$. Taking $\mu_i = \mu$ for
  all $i \in \IN$ in Lemma~\ref{weak transitions closed} yields that
  $\lc \nu \mid \mu \mathbin{\Arrow{}} \nu \rc$ is a closed subset
  of~$\DistrF$. Similarly, the set
  $\lc \nu \mid \nu \brbisim \mu \rc$ is a closed subset
  of~$\DistrF$ by Theorem~\ref{bisimilarity compact}. The statement
  then follows since the intersection of two closed
  subsets of~$\DistrF$ is itself closed, and hence sequentially compact.
\end{proof}


\section{Cancellativity for branching probabilistic bisimilarity}
\label{sec-cancellativity}

With the results of Section~\ref{sec-continuity} in place, we turn to
stable processes and cancellativity.
In the introduction we argued that in general it doesn't need to be
the case that two branching probabilistic bisimilar distributions
assign the same weight to equivalence classes.
Here we show that this property does hold when restricting to stable
distributions.
We continue to prove the announced unfolding result, that for every
distribution~$\mu$ there exists a stable distribution~$\sigma$ such that
$\mu \Rightarrow \sigma$ and~$\mu \brbisim \sigma$.
That result will be pivotal in the proof of the cancellation theorem,
Theorem~\ref{cancellative}.


\begin{definition}
  \label{def-stable-distr}
  A distribution $\mu \in \DistrE$ is called \emph{stable} if, for all
  $\mubar \in \DistrE$, $\mu \Arrow{} \mubar$ and
  $\mu \mathrel{\brbisim} \mubar$ imply that $\mubar = \mu$.
\end{definition}

\noindent
Thus, a distribution~$\mu$ is called stable if it cannot perform
internal activity \vspace{1pt} without leaving its branching
bisimulation equivalence class.
By definition of $\tauhidearrow$ it is immediate that if
$\bigoplusiinI \: p_i \mydot \mu_i$ is a stable distribution with
$p_i > 0$ for~$i \in I$, then also each probabilistic
component~$\mu_i$ is stable.
Also, because two stable distributions $\mu$ and~$\nu$ don't have any
non-trivial partial $\tau$-transitions, weak decomposability between
them amounts to decomposability, i.e.\ if $\mu \brbisim \nu$ and
$\mu = \bigoplusiinI \: p_i \mu_i$ then distributions~$\nu_i$
for~$i \in I$ exist such that $\nu = \bigoplusiinI \: p_i \nu_i$ and
$\mu_i \brbisim \nu_i$ for~$i \in I$.

\blankline


\noindent
The next result states that, contrary to distributions in general, two
stable distributions are branching bisimilar precisely when they
assign the same probability on all branching bisimilarity classes
of~$\calE$.

\begin{lemma}
  \label{decomp}
  Let $\mu, \nu \in \DistrE$ be two stable distributions. Then it
  holds that $\mu \brbisim \nu$ iff $\mu[C] = \nu[C]$ for each
  equivalence class~$C$ of branching probabilistic bisimilarity
  in~$\calE$.
\end{lemma}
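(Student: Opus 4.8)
The plan is to prove the two directions separately, exploiting the fact—noted just above the statement—that for stable distributions weak decomposability collapses to ordinary decomposability, so that no nontrivial $\tau$-transitions need to be chased.

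For the backward direction, I would show that the relation
\[
  \calR = \lc \pair{\mu}{\nu} \mid \text{$\mu,\nu$ stable and } \mu[C]=\nu[C] \text{ for every } C \in \calE/{\brbisim} \rc
\]
is a branching probabilistic bisimulation. The key observation is that a stable distribution assigns positive probability only to non-deterministic processes, and its weight on each class~$C$ is completely determined by the classes of its support. Given $\mu \mopcalR \nu$ and a decomposition $\mu = \bigoplusiinI \: p_i \mydot \mu_i$, I would first reduce to the case where each $\mu_i$ is a single $\delta(E_i)$ (regrouping the support of $\mu$ by bisimilarity class), then match probabilities class-by-class: since $\mu[C]=\nu[C]$ for each class, and $\nu$ is stable (so $\nu \Arrow{} \nubar$ forces $\nubar = \nu$), I can split $\nu$ into pieces $\nu_i$ that realize the same per-class weights as the $\mu_i$, giving $\nu = \bigoplusiinI \: p_i \mydot \nu_i$ with $\mu_i \brbisim \nu_i$. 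The transfer property is handled similarly: a transition $\mu \arrow{\alpha}\mu'$ decomposes via Lemma~\ref{lem-decomposition} into transitions of the Dirac components $\delta(E_i)\arrow{\alpha}\mu'_i$; each $E_i$ must be matched within $\nu$ by a bisimilar $E'$ offering a corresponding (combined) $\alpha$-move, because $\mu[[E_i]]=\nu[[E_i]]>0$ forces $\nu$ to charge the class of $E_i$, and stability again rules out any genuine weak lead-in.

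For the forward direction, assuming $\mu \brbisim \nu$ with both stable, I would prove $\mu[C]=\nu[C]$ for every class $C$ by induction on the bisimilarity class, ordered by the complexity measure~$c$ or by a depth argument. The engine is again the collapse of weak decomposability to decomposability: taking the canonical decomposition $\mu = \bigoplusiinI \: p_i \mydot \delta(E_i)$ grouped so that within one group all $E_i$ lie in the same class, decomposability yields $\nu_i$ with $\nu = \bigoplusiinI \: p_i \mydot \nu_i$ and $\delta(E_i) \brbisim \nu_i$; summing the $p_i$ over all indices whose class equals a fixed $C$ then equates $\mu[C]$ with the total weight $\nu$ places on $C$. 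A symmetric argument starting from $\nu$ gives the reverse inequality, and the two together give equality.

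The main obstacle I expect is the bookkeeping in matching \emph{combined} transitions and combined decompositions: because the transition relation on distributions permits probabilistic mixing of non-deterministic alternatives (a single $\delta(E)$ can move to a convex combination of its targets), a class $C$ in $\nu$ may be realized only by a spread of several support points of $\nu$, not by a single Dirac summand. Lemma~\ref{flexibelDelen} is exactly the tool for re-partitioning $\nu$ so that the weights align with the grouping imposed by $\mu$, and I would lean on it to turn the intuitive ``same weight on each class'' statement into an explicit common refinement $\nu = \bigoplusiinI \: p_i \mydot \nu_i$ with $\mu_i \brbisim \nu_i$. Establishing that this refinement respects stability—so that the matched pieces remain stable and the induction goes through—is the delicate point, but stability of $\nu$ guarantees its support consists of genuinely non-deterministic processes with no inert $\tau$, which keeps the decomposition honest.
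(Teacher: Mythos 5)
Your proposal has the right raw ingredients (stability collapsing weak decomposability to decomposability; re-partitioning weights class by class), but both directions contain a genuine gap. In the forward direction you stop one application of weak decomposability too early. From $\mu \brbisim \nu$, weak decomposability, and stability of~$\nu$ you correctly obtain $\nu = \bigoplusiinI \: p_i \stardot \nu_i$ with $\delta(E_i) \brbisim \nu_i$ (and each $\nu_i$ stable). But your next step, ``summing the $p_i$ over all indices whose class equals a fixed $C$ equates $\mu[C]$ with the total weight $\nu$ places on $C$'', silently assumes that each $\nu_i$ is concentrated on the equivalence class of~$E_i$. Bisimilarity to a Dirac distribution does not give this: the paper's own example $\delta(\tau \pref (P \probc{\mytinyfrac12} Q)) \brbisim \ttfrac12 \den{P} \oplus \ttfrac12 \den{Q}$ exhibits a Dirac bisimilar to a distribution spread over two classes. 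Without concentration, $\nu[C] = \sumiinI \: p_i \mkern1mu \nu_i[C]$ bears no relation --- not even an inequality --- to $\sum_{i \colon E_i \in C} \: p_i$, so your ``two inequalities by symmetry'' argument has nothing to combine, and the induction on the complexity measure does not supply the missing fact either. The paper closes exactly this hole with a \emph{second} application of weak decomposability, now from the side of $\nu_i = \bigoplusjinJ \: q_{ij} \stardot F_j$: it yields $\delta(E_i) \Arrow{} \bigoplusjinJ \: q_{ij} \stardot \mu'_{ij}$ with $\mu'_{ij} \brbisim \delta(F_j)$; since components of a stable distribution are stable, $\delta(E_i)$ is stable, so this weak move must be trivial, whence $\mu'_{ij} = \delta(E_i)$ and $F_j \brbisim E_i$ whenever $q_{ij} > 0$. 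That concentration property is the crux of the forward direction and is absent from your plan.

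In the backward direction, the relation $\calR$ you propose --- pairs of stable distributions with equal class weights --- cannot be a branching probabilistic bisimulation as literally defined: the transfer clause requires the derivatives $\mu'$ and $\nu'$ of an $\alpha$-transition to be related by $\calR$ again, but derivatives of stable distributions are in general not stable, so these pairs lie outside $\calR$ by construction. One can repair this by verifying $\calR \cup {\brbisim}$ instead, but at the point where the matched moves of the Dirac components are recombined you would in any case need congruence of $\brbisim$ under probabilistic composition. The paper short-circuits all of this and builds no relation at all: it sets $r_{ij} = p_i \mkern1mu q_j / \mu[C_i]$ when $E_i \brbisim F_j$ and $r_{ij} = 0$ otherwise, checks $\sumjinJ \: r_{ij} = p_i$ and $\sumiinI \: r_{ij} = q_j$ using the hypothesis $\mu[C] = \nu[C]$, writes $\mu$ and $\nu$ as matching mixtures of the $\delta(E_i)$ and $\delta(F_j)$, and concludes $\mu \brbisim \nu$ directly from Lemma~\ref{lemma-congruence-for-prob-composition}. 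Notably, this direction of the lemma needs no stability whatsoever, which your formulation obscures by baking stability into the relation.
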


\begin{proof} Suppose $\mu = \bigoplusiinI \: p_i \stardot E_i$,
  $\nu = \bigoplusjinJ \: q_j \stardot F_{\mkern-1mu j}$, and $\mu \brbisim
  \nu$. By weak decomposability,
  $\nu \Arrow{} \nubar = \bigoplusiinI \: p_i \mydot \nu_i$ for
  suitable $\nu_i \in \DistrE$ for~$i \in I$ with
  $\nu_i \brbisim \delta(E_i)$ and $\nubar \brbisim \mu$. Hence,
  $\nubar \brbisim \mu \brbisim \nu$. Thus, by stability of~$\nu$, we
  have $\nubar = \nu$. Say,
  $\nu_i = \bigoplusjinJ \: q_{i \mkern-1mu j} \stardot F_{\mkern-1mu j}$ with
  $q_{i \mkern-1mu j} \geqslant 0$, for $i \in I$,~$j \in J$.
  Since $\nu_i \brbisim \delta(E_i)$, we have by weak decomposability,
  $\delta(E_i) \Arrow{} \bigoplusjinJ \: q_{i \mkern-1mu j} \mydot \mu'_{i \mkern-1mu j}$ such
  that $\delta(E_i) \brbisim \bigoplusjinJ \: q_{i \mkern-1mu j} \mydot \mu'_{i \mkern-1mu j}$
  and $\mu'_{i \mkern-1mu j} \brbisim \delta(F_{\mkern-1mu j})$ for suitable
  $\mu'_{i \mkern-1mu j} \in \DistrE$. Since $\mu$ is stable, so
  is~$\delta(E_i)$. Hence
  $\delta(E_i) = \bigoplusjinJ \: q_{i \mkern-1mu j} \mydot \mu'_{i \mkern-1mu j}$,
  $\mu'_{i \mkern-1mu j} = \delta(E_i)$, and $E_i \brbisim F_{\mkern-1mu j}$ if $q_{i \mkern-1mu j} >
  0$. Put $p_{i \mkern-1mu j} = p_i \mkern1mu q_{i \mkern-1mu j}$, $E_{i \mkern-1mu j} = E_i$ if
  $q_{i \mkern-1mu j} > 0$, and $E_{i \mkern-1mu j} = \bfzero$ otherwise, $F_{i \mkern-1mu j} = F_{\mkern-1mu j}$ if
  $q_{i \mkern-1mu j} > 0$, and $F_{i \mkern-1mu j} = \bfzero$ otherwise, for $i \in
  I$,~$j \in J$. Then it holds that
  \begin{displaymath}
    \begin{array}{l@{\;}c@{\;}l@{\;}c@{\;}l@{\;}c@{\;}l@{\;}c@{\;}l}
      \mu
      & =
      & \bigoplusiinI \: p_i \stardot E_i
      & =
      & \bigoplusiinI \: p_i \mydot
        \bigl( \bigoplusjinJ \: q_{i \mkern-1mu j} \stardot E_i \bigr)
      & =
      & \bigoplusiinI \, \bigoplusjinJ \: p_i \mkern1mu q_{i \mkern-1mu j} \stardot E_i
      & =
      & \bigoplusiinI \, \bigoplusjinJ \: p_{i \mkern-1mu j} \stardot E_{i \mkern-1mu j} 
        \smallskip \\
      \nu
      & =
      & \bigoplusiinI \: p_i \mydot \nu_i
      & =
      & \bigoplusiinI \: p_i \mydot
        \bigl( \bigoplusjinJ \: q_{i \mkern-1mu j} \stardot F_{\mkern-1mu j} \bigr) 
      & =
      & \bigoplusiinI \, \bigoplusjinJ \: p_i \mkern1mu q_{i \mkern-1mu j} \stardot F_{\mkern-1mu j}
      & =
      & \bigoplusiinI \, \bigoplusjinJ \: p_{i \mkern-1mu j} \stardot F_{i \mkern-1mu j}
        \mkern1mu .
    \end{array}
  \end{displaymath}
  Now, for any equivalence class~$C$ of~$\calE$ modulo~$\brbisim$, it
  holds that $E_{i \mkern-1mu j} \in C \Leftrightarrow F_{i \mkern-1mu j} \in C$ for all indices
  $i \in I$,~$j \in J$. So,
  $\mu[C] =
  \textstyle{\sum_{i {\in} I, j {\in} J \colon E_{i \mkern-1mu j} \in C}} \: p_{i \mkern-1mu j} = 
  \textstyle{\sum_{i {\in} I, j {\in} J \colon F_{i \mkern-1mu j} \in C}} \: p_{i \mkern-1mu j} = 
  \nu[C]$.

  For the reverse direction, suppose
  $\mu = \bigoplusiinI \: p_i \stardot E_i$,
  $\nu = \bigoplusjinJ \: q_j \stardot F_{\mkern-1mu j}$, with $p_i, q_j > 0$, and
  $\mu[C] = \nu[C]$ for each equivalence class
  $C \in \calE / {\bisim}$.

  For $i \in I$ and $j \in J$, let $C_i$ and~$D_{\!j}$ be the
  equivalence class in~$\calE$ of $E_i$ and~$F_{\mkern-1mu j}$
  modulo~$\brbisim$. Define
  $r_{i \mkern-1mu j} = \delta_{\mkern1mu i \mkern-1mu j} p_i q_j / \mu [ C_i ]$,
  for $i \in I$, $j \in J$, where
  $\delta_{\mkern1mu i \mkern-1mu j} = 1$
  if~$E_i \brbisim F_{\mkern-1mu j}$ and
  $\delta_{\mkern1mu i \mkern-1mu j} = 0$ otherwise. Then it holds
  that
  \begin{displaymath}
    \textstyle{\sumjinJ} \: r_{i \mkern-1mu j} =
    \textstyle{\sumjinJ} \:
    \mysmallfrac{ \delta_{\mkern1mu i \mkern-1mu j} p_i \mkern1mu q_j }{ \mu [ C_i]
      \rule{0pt}{9pt} } = 
    \mysmallfrac{ p_i }{ \mu [ C_i ] \rule{0pt}{9pt} }
    \textstyle{\sumjinJ} \: \delta_{\mkern1mu i \mkern-1mu j}
    \mkern1mu q_j =
    \mysmallfrac{ p_i \nu [ C_i ]}{ \mu [ C_i ] \rule{0pt}{9pt} } =
    p_i.
  \end{displaymath}
  Since
  $\delta_{\mkern1mu i \mkern-1mu j} p_i \mkern1mu q_j / \mu[C_i] =
  \delta_{\mkern1mu i \mkern-1mu j} p_i \mkern1mu q_j / \nu[D_{\!j}]$ for
  $i \in I$, $j \in J$, we also have
  $\sumiinI \: r_{i \mkern-1mu j} = q_j$. Therefore, we can write
  $\mu = \bigoplusiinI \, \bigoplusjinJ \: r_{i \mkern-1mu j} \stardot
  E_{i \mkern-1mu j}$ and
  $\nu = \bigoplusiinI \, \bigoplusjinJ \: r_{i \mkern-1mu j} \stardot
  F_{i \mkern-1mu j}$ for suitable $E_{i \mkern-1mu j}$
  and~$F_{i \mkern-1mu j}$ such that
  $E_{i \mkern-1mu j} \brbisim F_{i \mkern-1mu j}$.  Calling
  Lemma~\ref{lemma-congruence-for-prob-composition} it follows that
  $\mu \brbisim \nu$.
\end{proof}


\noindent
Next, in Lemma~\ref{stabilizing}, we are about to prove a crucial
property for our proof of cancellativity, the proof of
Theorem~\ref{cancellative} below. Generally, a distribution may
allow inert partial transitions. However, the
distribution can be unfolded to reach via inert partial transitions a
stable distribution, which doesn't have these by definition. To obtain
the result we will rely on the topological property of sequential
compactness of the set
$T_\mu = \lc \mu' \mid \mu' \brbisim \mu \land \mu \mathbin{\Arrow{}}
\mu' \rc$ introduced in the previous section.

\begin{lemma}
  \label{stabilizing}
  For all $\mu \in \DistrE$ there is a stable
  distribution~$\sigma \in \DistrE$ such that $\mu \Rightarrow \sigma$.
\end{lemma}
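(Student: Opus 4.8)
The plan is to exhibit a stable distribution as the minimizer of a continuous ``potential'' over the sequentially compact set of bisimilar weak descendants of~$\mu$, thereby replacing an unbounded combinatorial unfolding by a single topological extremum argument. As in Corollary~\ref{cor-bisimilar derivatives compact}, fix the finite transition-closed $\calF \subseteq \calE$ with $\mu \in \DistrF$, and work inside the sequentially compact set $T_\mu = \lc \nu \mathbin\in \DistrE \mid \nu \brbisim \mu \land \mu \Rightarrow \nu \rc$.

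First I would introduce the expected-complexity functional $\Phi(\sigma) = \sum_{E \in \spt(\sigma)} \sigma(E)\, c(E)$ on $\DistrF$, using the complexity measure~$c$ of Section~\ref{sec-process-language}. The crucial observation is that~$c$ strictly decreases along transitions: whenever $E \arrow{\alpha} \mu'$ for $E \in \calE$, every $F \in \spt(\mu')$ satisfies $c(F) < c(E)$, which follows by a routine structural induction on the rules of Definition~\ref{def-pr-transition-relation}(a) (a prefix is consumed and the targets occur strictly inside it). Lifting this to distributions via Definition~\ref{def-pr-transition-relation}(b), I would show that~$\Phi$ does not increase along any partial transition~$\tauhidearrow$, and \emph{strictly} decreases exactly when the distribution genuinely changes: cases~(i) and~(iii) of Definition~\ref{def-weak-arrows} move a positive amount of mass along a real $\tau$-step, dropping~$\Phi$ by a positive amount, whereas case~(ii) fixes both~$\sigma$ and~$\Phi(\sigma)$. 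Since a path $\sigma \Rightarrow \bar\sigma$ with $\bar\sigma \neq \sigma$ must contain at least one distribution-changing step, it follows that $\sigma \Rightarrow \bar\sigma$ and $\bar\sigma \neq \sigma$ together imply $\Phi(\bar\sigma) < \Phi(\sigma)$.

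Next, observe that~$T_\mu$ is non-empty (it contains~$\mu$, as $\brbisim$ and $\Rightarrow$ are reflexive) and sequentially compact by Corollary~\ref{cor-bisimilar derivatives compact}. As~$\Phi$ is a finite linear combination of the coordinate maps $\sigma \mapsto \sigma(E)$ for $E \in \calF$, it is continuous in the Chebyshev metric, so it attains its infimum on~$T_\mu$ at some~$\sigma \in T_\mu$ (extract a convergent subsequence of a minimizing sequence and pass to the limit using continuity and closedness of~$T_\mu$). I claim~$\sigma$ is stable in the sense of Definition~\ref{def-stable-distr}. Indeed, suppose toward a contradiction that $\sigma \Rightarrow \bar\sigma$, $\sigma \brbisim \bar\sigma$, and $\bar\sigma \neq \sigma$. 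Transitivity of~$\Rightarrow$ gives $\mu \Rightarrow \bar\sigma$, and transitivity and symmetry of~$\brbisim$ give $\bar\sigma \brbisim \mu$, so $\bar\sigma \in T_\mu$; but $\Phi(\bar\sigma) < \Phi(\sigma)$ by the previous paragraph, contradicting minimality of~$\sigma$. Hence no such~$\bar\sigma$ exists, $\sigma$ is stable, and $\mu \Rightarrow \sigma$ (moreover $\sigma \brbisim \mu$, which is what the cancellation theorem will subsequently require).

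The main obstacle is precisely the one the topology is designed to remove. A purely combinatorial proof would iterate distribution-changing $\tauhidearrow$ steps to drive~$\Phi$ down, but because the splitting probabilities~$r$ in case~(iii) of Definition~\ref{def-weak-arrows} range over the continuum, the resulting sequence of $\Phi$-values need not be well-founded and need not reach a stable point; the uncountable branching blocks any termination argument. Sequential compactness of~$T_\mu$ is exactly what guarantees that the infimum of~$\Phi$ is \emph{attained}, at which point minimality forces stability. The only routine ingredients are the complexity-decrease lemma and the continuity of~$\Phi$; the genuine content is imported wholesale from Section~\ref{sec-continuity}.
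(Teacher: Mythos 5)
Your proof is correct and follows essentially the same route as the paper: your functional $\Phi$ is exactly the paper's weight $\wgt(\mu) = \sum_{E} \mu(E)\cdot c(E)$, and the paper likewise obtains $\sigma$ as the point of $T_\mu$ attaining $\inf\lc \wgt(\mu') \mid \mu' \in T_\mu \rc$, using Corollary~\ref{cor-bisimilar derivatives compact}. If anything, you spell out more explicitly than the paper why a weak transition that actually changes the distribution strictly decreases the weight, which is the step the paper compresses into ``by definition of $T_\mu$, the distribution $\sigma$ must be stable.''
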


\begin{proof}
  Define the \emph{weight} of a distribution by
  $\wgt(\mu) \mathbin= \sum_{E \in \calE} \: \mu(E)\cdot \mkern1mu c(E)$,
  i.e.,\ the weighted\vspace{1pt} average of the complexities of the
  states in its support. In view of these definitions,
  $E \arrow\alpha \mu$ implies $\wgt(\mu) < \wgt(\delta(E))$ and
  $\mu \arrow\alpha \mu'$ implies $\wgt(\mu')< \wgt(\mu)$. In addition,
  $\mu \mathbin{\Arrow{}} \mu'$ implies
  $\wgt(\mu') \mathbin\leqslant \wgt(\mu)$.

  For a distribution $\mu \mathbin\in \DistrE$, the set~$T_\mu$ is
  given by
  $T_\mu \mathrel{=} \lc \mu' \mid \mu' \brbisim \mu \land \mu
  \mathbin{\Arrow{}} \mu' \rc$. Consider the value
  $\inf \lc \wgt(\mu') \mid {\mu' \in T_\mu} \rc$.  By
  Corollary~\ref{cor-bisimilar derivatives compact}, $T_\mu$ is a
  sequentially compact set. Since the infimum over a sequentially
  compact set will be reached, there exists a distribution~$\sigma$
  such that $\mu \Arrow{} \sigma$, $\sigma \brbisim \mu$, and
  $\wgt(\sigma) = \inf \lc \wgt(\mu') \mid {\mu' \in T_\mu} \rc$. By
  definition of~$T_\mu$, the distribution~$\sigma$ must be stable.
\end{proof}


\noindent
We have arrived at the main result of the paper, slightly more general
formulated compared to the description in the introduction. The
message remains the same: if two distributions are branching
probabilistic bisimilar and have components that are branching
probabilistic bisimilar, then the components that remain after
cancelling the earlier components are also branching probabilistic
bisimilar. As we see, the previous lemma is essential in the proof as
given.

\begin{theorem}[Cancellativity]
  \label{cancellative}
  Let $\mu, \mu', \nu, \nu' \in \DistrE$ and $0 < r \leqslant 1$ be
  such that $\mu \oplusr \nu \brbisim \mu' \oplusr \nu'$ and
  $\nu \brbisim \nu'$. Then it holds that $\mu \brbisim \mu'$.
\end{theorem}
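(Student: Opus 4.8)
The plan is to reduce everything to stable distributions and then finish with linear arithmetic on branching-bisimilarity class weights, exploiting the characterization of Lemma~\ref{decomp}. If $r = 1$ the hypothesis is literally $\mu \brbisim \mu'$, so assume $0 < r < 1$. First I would apply Lemma~\ref{stabilizing}, in the form supplied by its proof where the stabilizing distribution additionally lies in $T_\mu$ and hence is branchingly equivalent to the original, to obtain stable distributions $\sigma_\mu \brbisim \mu$, $\sigma_\nu \brbisim \nu$, $\sigma_{\mu'} \brbisim \mu'$ and $\sigma_{\nu'} \brbisim \nu'$. By Lemma~\ref{lemma-congruence-for-prob-composition} the hypotheses transfer to the stabilized versions, giving $\sigma_\mu \oplusr \sigma_\nu \brbisim \sigma_{\mu'} \oplusr \sigma_{\nu'}$ and $\sigma_\nu \brbisim \sigma_{\nu'}$.

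The arithmetic finish then runs as follows. Since $\sigma_\nu$ and $\sigma_{\nu'}$ are stable and bisimilar, Lemma~\ref{decomp} gives $\sigma_\nu[C] = \sigma_{\nu'}[C]$ for every branching-bisimilarity class $C$ of $\calE$. Provided the composite distributions $\sigma_\mu \oplusr \sigma_\nu$ and $\sigma_{\mu'} \oplusr \sigma_{\nu'}$ are themselves stable, Lemma~\ref{decomp} applied to these bisimilar stable distributions yields, for each class $C$, the identity $r \mkern1mu \sigma_\mu[C] + (1-r)\mkern1mu\sigma_\nu[C] = r \mkern1mu \sigma_{\mu'}[C] + (1-r)\mkern1mu\sigma_{\nu'}[C]$, because class weight is linear under $\oplusr$. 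As $(1-r)\mkern1mu\sigma_\nu[C] = (1-r)\mkern1mu\sigma_{\nu'}[C]$, cancelling these terms and dividing by $r > 0$ gives $\sigma_\mu[C] = \sigma_{\mu'}[C]$ for all $C$. A second appeal to Lemma~\ref{decomp}, now in the reverse direction, delivers $\sigma_\mu \brbisim \sigma_{\mu'}$, whence $\mu \brbisim \sigma_\mu \brbisim \sigma_{\mu'} \brbisim \mu'$ by transitivity.

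The main obstacle is the proviso just used: that a probabilistic composition of two stable distributions is again stable. This is the converse of the easy observation recorded after Definition~\ref{def-stable-distr}, and it is where I expect the real work to lie. I would attempt it by contradiction: if $\sigma_\mu \oplusr \sigma_\nu$ admitted a non-trivial inert partial transition $\sigma_\mu \oplusr \sigma_\nu \tauhidearrow \xi \brbisim \sigma_\mu \oplusr \sigma_\nu$, then Lemma~\ref{lem-decomposition}(b) would split it into partial transitions $\sigma_\mu \tauhidearrow \bar\sigma_\mu$ and $\sigma_\nu \tauhidearrow \bar\sigma_\nu$ with $\xi = \bar\sigma_\mu \oplusr \bar\sigma_\nu$. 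The crux is then to show that each component transition is \emph{itself} inert, i.e.\ $\bar\sigma_\mu \brbisim \sigma_\mu$ and $\bar\sigma_\nu \brbisim \sigma_\nu$; stability of the components would then force $\bar\sigma_\mu = \sigma_\mu$ and $\bar\sigma_\nu = \sigma_\nu$, contradicting non-triviality. Transferring the inertness of the composite transition down to its components is exactly the delicate step—a priori the two components could move to non-equivalent distributions in a correlated way that leaves the mixture in its class—and here I would expect to need the continuity machinery of Section~\ref{sec-continuity}, in particular the sequential compactness captured by Theorem~\ref{bisimilarity compact} and Corollary~\ref{cor-bisimilar derivatives compact}, to control the potentially infinite unfolding. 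An alternative organisation is to stabilise the two composites directly, so that their components come out stable for free by the note after Definition~\ref{def-stable-distr}; but then one must reconcile the $\nu$-component extracted from the composite's stabilisation with $\nu$ itself, and showing that this component still carries the class weights of $\nu$ runs into precisely the same difficulty.
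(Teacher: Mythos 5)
Your reduction to stable distributions and the arithmetic endgame (equal class weights via Lemma~\ref{decomp}, cancellation of the $\nu$-terms, division by $r$, and a reverse appeal to Lemma~\ref{decomp}) coincide exactly with how the paper finishes. But the proof has a genuine gap at the point you yourself flag: the proviso that $\sigma_\mu \oplusr \sigma_\nu$ and $\sigma_{\mu'} \oplusr \sigma_{\nu'}$ are stable. This ``composition of stable distributions is stable'' claim is nowhere proven in the paper, and your sketched contradiction argument does not close it: after splitting $\sigma_\mu \oplusr \sigma_\nu \tauhidearrow \xi$ into component transitions via Lemma~\ref{lem-decomposition}(b), nothing forces each component transition to be inert, which is precisely the correlated-compensation scenario you identify and leave open. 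Note also that stability (Definition~\ref{def-stable-distr}) quantifies over weak transitions $\Arrow{}$, whose intermediate distributions may leave the equivalence class and return, so even ruling out non-trivial inert single steps $\tauhidearrow$ would not suffice as stated. As it stands, your argument proves the theorem only modulo an unproven lemma whose difficulty is comparable to that of the theorem itself.

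The irony is that the ``alternative organisation'' you dismiss in your last sentence is the paper's actual proof, and the objection you raise against it evaporates on inspection. The paper stabilises the \emph{composite}: by Lemma~\ref{stabilizing} (in the strengthened form from its proof) there is a stable $\sigma$ with $\mu \oplusr \nu \Arrow{} \sigma$ and $\sigma \brbisim \mu \oplusr \nu$. Applying weak decomposability of $\brbisim$ to this pair, with the decomposition $\mu \oplusr \nu = r\mkern1mu\mu \oplus (1-r)\mkern1mu\nu$, yields $\sigma \Arrow{} \bar{\sigma} = \mubar \oplusr \nubar$ with $\bar{\sigma} \brbisim \mu \oplusr \nu \brbisim \sigma$, $\mubar \brbisim \mu$ and $\nubar \brbisim \nu$; stability of $\sigma$ then forces $\bar{\sigma} = \sigma$, so $\sigma = \mubar \oplusr \nubar$, and $\mubar$, $\nubar$ are stable by the easy observation after Definition~\ref{def-stable-distr}. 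So there is no ``reconciliation'' problem: weak decomposability hands you, in one step, components that are both stable \emph{and} branching bisimilar to the original $\mu$ and $\nu$. Nor does one ever need the component to ``carry the class weights of $\nu$'' --- a meaningless requirement when $\nu$ is unstable; what the endgame needs is $\nubar[C] = \nubar'[C]$, which follows from Lemma~\ref{decomp} applied to the \emph{stable} pair $\nubar \brbisim \nu \brbisim \nu' \brbisim \nubar'$. Reorganising your proof along these lines eliminates the unproven lemma entirely and recovers the paper's argument.
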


\begin{proof}
  Choose $\mu$, $\mu'$, $\nu$, $\nu'$, and~$r$ according to the
  premise of the theorem. By Lemma~\ref{stabilizing}, a stable
  distribution~$\sigma$ exists such that
  $\mu \oplusr \nu \Arrow{} \sigma$ and
  $\sigma \brbisim \mu \oplusr \nu$. By weak decomposability, we can
  find distributions $\mubar$ and~$\nubar$ such that
  $\sigma \Arrow{} \mubar \oplusr \nubar$, $\mubar \brbisim \mu$, and
  $\nubar \brbisim \nu$. By stability of~$\sigma$ we have
  $\sigma = \mubar \oplusr \nubar$. Thus $\mubar \oplusr \nubar$ is
  stable. Symmetrically, there are distributions $\mubar'$
  and~$\nubar'$ such that $\mubar' \brbisim \mu'$,
  $\nubar' \brbisim \nu'$ and such that $\mubar' \oplusr \nubar'$ is
  stable. Note,
  $\mubar \oplusr \nubar \brbisim \mu \oplusr \nu \brbisim \mu'
  \oplusr \nu' \brbisim \mubar' \oplusr \nubar'$.

  Let $C \subseteq \calE$ be an equivalence class
  of~$\calE / {\brbisim}$. The distributions $\mubar \oplusr \nubar$
  and $\mubar' \oplusr \nubar'$ are stable and
  ${\mubar \oplusr \nubar} \brbisim {\mubar' \oplusr \nubar'}$. From
  Lemma~\ref{decomp} we obtain that
  $(\mubar \oplusr \nubar)[C] = (\mubar' \oplusr\nubar')[C]$.  Since
  $\nu$ and~$\nubar$ are stable and $\nubar \brbisim \nubar'$, we have
  $\nubar[C] = \nubar'[C]$ for the same reason. Because
  $(\mubar \oplusr \nubar)[C] = r \cdot \mubar[C] + (1{-}r) \cdot
  \nubar[C]$ and
  $(\mubar' \oplusr \nubar')[C] = r \cdot \mubar'[C] + (1{-}r) \cdot
  \nubar'[C]$, we calculate
  \begin{displaymath}
    r \cdot \mubar[C]
    = 
    (\mubar \oplusr \nubar)[C] - (1{-}r) \cdot \nubar[C]
    =
    (\mubar' \oplusr \nubar')[C] - (1{-}r) \cdot \nubar'[C]
    = 
    r \cdot \mubar'[C] \mkern1mu.
  \end{displaymath}
  Since $r \neq 0$, it follows $\mubar[C] = \mubar'[C]$. Since
  $\mubar$ and~$\mubar'$ are stable it follows by Lemma~\ref{decomp}
  that $\mubar \brbisim \mubar'$. Consequently,
  $\mu \brbisim \mubar \brbisim \mubar' \brbisim \mu'$. In particular
  $\mu \brbisim \mu'$, as was to be shown.
\end{proof}


\section{Concluding remarks}
\label{sec-conclusion}

We have shown a cancellation law for distributions with
respect to branching probabilistic bisimilarity.
The result rests on the notion of a stable distribution.
Stable distributions enjoy two properties that have been essential to
our set-up.
(i)~Every distribution has a weak unfolding towards a stable
distribution that is branching probabilistic bisimilar.
(ii)~Branching probabilistic
bisimilarity for stable distributions is determined by their summed
probability for equivalence classes of non-deterministic processes.
Techniques from metric topology have been used to establish the first
result.

We used the cancellativity result in \cite{GGV19:fc} in
order to obtain a complete axiomatisation of branching probabilistic
bisimilarity.
The technical report~\cite{GGV19:tue} contains a proof
sketch in line with this paper.
Yet, as cancellativity is such a fundamental property, and the notion
of branching probabilistic bisimulation is mathematically quite
involved, we regard it necessary to provide a full, detailed proof.

The continuity results of Section~\ref{sec-continuity}, as well as the
argumentation from metric topology at other places, are exploited to
deal with the uncountable number of inert transitions that arise from
combined transitions.
One may wonder if the main theorems of the paper can be achieved based
on combinatorial arguments.
Intuitively, transitions span a convex polyhedron and the
uncountability of the branching of transitions may be reduced to the
finiteness of the transitions spanning the polyhedron.
Despite a number of attempts, we have been forced to leave
the question of a simpler combinatorial proof open.

We leave it as open question for future research weather
cancellativity holds for larger classes of probabilistic processes, as
could be obtained, for instance, by adding recursion, uncountable
choice and/or parallel composition to the syntax.
A further topic for
future research is the study of cancellativity for other weak variants
of probabilistic bisimulation, in particular weak probabilistic
bisimulation.

Other future work is to be devoted
to the construction of an efficient decision algorithm for branching
probabilistic bisimilarity.
A decision procedure for strong probabilistic bisimilarity based on
so-called extended ordered binary trees has been proposed
in~\cite{BEM00:jcss}. An improved algorithm based on partition
refinement is presented in~\cite{GRV18:alg}.
Partition refinement algorithms for weak and branching probabilistic
bisimilarity on states are proposed in~\cite{TH15:ic}.
Reduction of weak probabilistic bisimilarity checking of the
state-based approach of~\cite{CS02:concur} to linear programming is
studied in~\cite{FHHT16:facs}.
Although it is currently not clear how to construct an algorithm
deciding branching probabilistic bisimilarity as put forward in this
paper, it is likely that the procedures of~\cite{GV90:icalp} and \cite{TH15:ic}
can serve as a starting point.

\bibliographystyle{eptcs}
\bibliography{main}

\end{document}